\documentclass[journal]{IEEEtran}

\usepackage{hyperref}
\usepackage{xcolor,soul,framed} 
\usepackage{stmaryrd}
\colorlet{shadecolor}{yellow}
\usepackage{graphicx}
\graphicspath{{../pdf/}{../jpeg/}}
\DeclareGraphicsExtensions{.pdf,.jpeg,.png}

\let\labelindent\relax  
\usepackage{enumitem}   
\usepackage{amsmath,amssymb,amsfonts,empheq,amsthm}
\usepackage{mathrsfs}

\usepackage{cite}

\theoremstyle{remark}
\newtheorem{remark}{\indent Remark}

\theoremstyle{definition}
\newtheorem{definition}{Definition}
\newtheorem{lemma}{Lemma}
\newtheorem{proposition}{Proposition}
\newtheorem{theorem}{Theorem}

\newcommand{\la}{\langle}
\newcommand{\ra}{\rangle}

\newcommand{\mrm}[1] {\mathrm{#1}}
\newcommand{\mbf}[1] {\mathbf{#1}}

\newcommand{\re}[1] {\mathrm{Re}\{#1\}}
\newcommand{\im}[1] {\mathrm{Im}\{#1\}}
\newcommand{\diag}[1] {\mathrm{diag}\{#1\}}
\newcommand{\ms}{\mathrm{s}}
\newcommand{\ml}{\mathrm{l}}
\newcommand{\mv}{\mathrm{v}}
\newcommand{\mP}{\mathrm{P}}
\newcommand{\mI}{\mathrm{I}}
\newcommand{\md}{\mathrm{d}}
\newcommand{\mr}{\mathrm{r}}
\newcommand{\mpv}{\mathrm{pv}}

\newcommand{\HT}{\mathsf{H}}

\newcommand{\calG}{\mathcal{G}}

\newcommand{\calY}{\mathcal{Y}}
\newcommand{\calD}{\mathcal{D}}

\usepackage{bm}

\usepackage{array}
\usepackage{mdwmath}
\usepackage{mdwtab}
\usepackage{eqparbox}
\usepackage{url}
\usepackage{threeparttable}
\usepackage{booktabs}
\usepackage{multirow} 
\usepackage{algorithmic}
\usepackage{algorithm}
\usepackage{color}
\usepackage{soul}

\begin{document}
\bstctlcite{IEEEexample:BSTcontrol}
\title{{\fontsize{20pt}{18pt}\selectfont Decentralized Frequency-Domain Conditions
for $\calD$-Stability with Application to DC Microgrids}}
\author{Zelin~Sun, 
    Shanshan~Jiang,
    Xiaoyu~Peng,
    Xiang~Zhu,
    Xiuqiang~He,
    Hua~Geng,~\IEEEmembership{Fellow,~IEEE}
  \thanks{\textit{(Corresponding author: Hua~Geng, Xiuqiang~He)} Z.~Sun, S.~Jiang, X. Zhu, X.~He, and H.~Geng are with the Department of Automation, Tsinghua University, Beijing, China (e-mail: szl24@mails.tsinghua.edu.cn;  jss23@mails.tsinghua.edu.cn; zhu-x22@mails.tsinghua.edu.cn; xhe@tsinghua.edu.cn;  genghua@tsinghua.edu.cn). X.~Peng is with the Department of Electrical Engineering, Tsinghua University, Beijing, China (e-mail: pengxy19@tsinghua.org.cn). }}


\maketitle
\begin{abstract}
This paper proposes a decentralized method for regional pole placement, or $\mathcal{D}$-stability, in linearized networked systems. Existing LMI-based methods are hindered by confidentiality concerns regarding proprietary subsystem models and the absence of communication infrastructures. 
To overcome these barriers, we map the target region $\mathcal{D}$ of pole placement to an auxiliary left-half plane and introduce positive functions to handle the resulting complex-coefficient dynamics. 
We prove that $\mathcal{D}$-stability is guaranteed via local frequency-domain criteria without requiring shared subsystem models or inter-subsystem communication. This method is then tailored to DC microgrids, where a loop transformation is utilized to reallocate the burden of stability certification, deriving a broadcastable grid code for decentralized parameter synthesis. Numerical examples verify the efficacy of the proposed method. 
\end{abstract}

\begin{IEEEkeywords}
Regional pole placement, decentralized synthesis, frequency domain positive function, grid code, DC microgrid.
\end{IEEEkeywords}

%
\IEEEpeerreviewmaketitle


%

\section{Introduction}
The dynamic performance of linear networked systems is dictated by the location of their closed-loop poles. While standard stability merely requires all poles to reside in the closed left half-plane (cLHP), practical performance specifications, such as a minimal decay rate or an acceptable damping ratio, necessitate confining the poles within specific regions of the complex plane \cite{feng2025hybrid}. These requirements are formally encapsulated by the concept of \textit{regional pole placement}, or \textit{$\mathcal{D}$-stability} \cite{chilali1996h, chilali1999robust}.

Traditionally, $\mathcal{D}$-stability is achieved via state-space linear matrix inequality (LMI) methods \cite{chilali1996h, scherer1997multiobjective, chilali1999robust, leite2003improved, kanchanaharuthai2005robust, mao2009d, duan2013lmis, koru2022regional}. These methods utilize LMI characterizations to map the geometric boundaries of the target region onto the closed-loop state matrix, formulating control gain synthesis as a convex optimization problem.
However, applying these LMI syntheses to modern networked systems may encounter operational barriers. Centralized designs require full knowledge of the system-wide state-space matrix, which is often prohibited by confidentiality concerns regarding proprietary subsystem models \cite{pal2018privacy}. While the recent distributed LMI design enhances scalability \cite{koru2022regional}, it mandates a communication network for state exchange among subsystems. In practical infrastructures, such distributed communication links may be absent, and reliance on them introduces vulnerabilities to cyber-physical breaches \cite{gao2025data}. Therefore, there is a need for a strictly \textit{decentralized} synthesis method for $\mathcal{D}$-stability, one that requires neither shared subsystem models nor inter-subsystem communication.

Frequency-domain positive realness (PR) theory, the analytical counterpart to passivity, provides a framework for decentralized stability certification \cite{kottenstette2014relationships,khalil2002nonlinear}. The cornerstone of PR theory is its compositional property: the negative feedback interconnection of PR operators preserves the PR property, so system-wide stability can be guaranteed via local frequency-domain compliance \cite{brogliato2020dissipative,dey2022passivity}. This decouples the synthesis of independent subsystems without requiring global model information. However, classical PR theory is exclusively designed for standard LHP stability. Extending this compositional framework to certify $\mathcal{D}$-stability remains an open challenge.

This theoretical gap is epitomized by modern direct current microgrids (DCMGs). A DCMG comprises diverse distributed energy resources (DERs) and constant power loads (CPLs) owned and managed by independent stakeholders. Ensuring rapid voltage recovery and suppressing persistent oscillations requires regional pole placement \cite{hamzeh2016power}. Yet, vendors are reluctant to share detailed DER models, and real-time communication links among disparate devices can be absent \cite{dragivcevic2015dcI,pal2018privacy}. In this context, a decentralized method enables a ``grid code'' philosophy \cite{yang2019distributed, he2024passivity, sun2025decentralized, peng2026compositional}: if the system operator establishes a baseline based on the grid structure, independent subsystems can achieve plug-and-play operation and guarantee $\calD$-stability through local compliance.

This paper proposes a decentralized frequency-domain method for $\mathcal{D}$-stability that circumvents the confidentiality and communication barriers in existing methods. 
We map a target region $\mathcal{D}$ of pole placement to the cLHP in an auxiliary plane, yielding transfer functions with complex coefficients that preclude the direct application of classical PR theory. To resolve this, we introduce \textit{positive transfer functions} to prove that the stability of the mapped system, and thus $\mathcal{D}$-stability of the original system, can be guaranteed in a decentralized manner. This method is then tailored to DCMGs featuring heterogeneous DERs and CPLs. By employing a loop transformation, we reallocate the burden of stability certification and derive a broadcastable grid code. Subsystems can use non-iterative constraints for local control synthesis, ensuring $\mathcal{D}$-stability without inter-subsystem communication.

\textit{Notation:} $\mathbb{R}$ and $\mathbb{C}$ denote the sets of real and complex numbers. $I_n$ denotes the $n \times n$ identity matrix. $G(s) \in \mathbb{R}^{n \times m}(s)$ denotes a real-rational transfer function matrix, while $G(s) \in \mathbb{C}^{n \times m}(s)$ denotes a rational transfer function with complex coefficients. For $s \in \mathbb{C}$, $\re{s}$, $\im{s}$, and $\bar{s}$ denote its real part, imaginary part, and complex conjugate. For a matrix $A$, $\bar{A}$ denotes the matrix obtained by taking the complex conjugate of each element of $A$, while $A^\top$ and $A^\HT$ denote the transpose and Hermitian transpose. $A \succ 0$ ($A \succeq 0$) denotes that $A$ is a positive (semi-)definite Hermitian matrix. $A \otimes B$ is the Kronecker product. $\diag{A_1, \dots, A_N}$ denotes a block-diagonal matrix with diagonal blocks $A_k$. The closed left half-plane $\{s \in \mathbb{C} \mid \re{s} \le 0\}$ is abbreviated as cLHP. 

\section{Problem Formulation}\label{section: theory}
\subsection{Interconnected System Formulation}
Consider a networked system consisting of $N$ linear (or linearized) dynamic subsystems.
Each subsystem is characterized by a local transfer function matrix $G_k(s) \in \mathbb{R}^{m \times m}(s)$, mapping the input $u_k(s)$ to the output $y_k(s)$:
\begin{equation}\label{eq:subsystem}
y_k(s)= G_k(s) u_k(s), \quad k = 1, \dots, N
\end{equation}
The aggregate dynamics of all $N$ subsystems are captured as a block-diagonal transfer function matrix $G(s) = \diag{G_1(s), \dots, G_N(s)}\in \mathbb{R}^{mN \times mN}(s)$. The subsystems interact through a linearized static algebraic coupling network:
\begin{equation}\label{eq:network}
u = -Y y
\end{equation}
where $u = [u_1^\top, \dots, u_N^\top]^\top$, $y = [y_1^\top, \dots, y_N^\top]^\top$, and $Y \in \mathbb{R}^{mN \times mN}$ is a constant matrix encoding the network topology and coupling weights. 
The closed-loop poles are the roots of the characteristic equation: 
\begin{equation}\label{eq:char_eq}
\det(I + G(s)Y) = 0 .
\end{equation}

\subsection{Problem Statement}
To guarantee dynamic performance (e.g., decay rate, damping ratio), the closed-loop poles need to be confined within a target region $\mathcal{D}$ in the complex $s$-plane \cite{feng2025hybrid,chilali1996h}. 
\begin{definition}
A dynamical system is \textbf{$\calD$-stable} if all its poles lie in $\mathcal{D}$.  
\end{definition}

We introduce a generalized half-plane, as shown in Fig.~\ref{fg:half plane}(a):
\begin{equation}\label{eq:calD_0}
\mathcal{D}_0(\theta_0, \omega_0, \sigma_0) \triangleq \left\{ s \in \mathbb{C} \mid \text{Re}\left\{ e^{-j\theta_0}(s - j\omega_0) \right\} \leq \sigma_0 \right\}
\end{equation}
Since the system \eqref{eq:subsystem}-\eqref{eq:network} is real rational, its closed-loop poles appear in conjugate pairs. If $s_p \in \mathcal{D}_0$ is a pole, its conjugate $\bar{s}_p$ must lie in $\mathcal{D}_0(-\theta_0, -\omega_0, \sigma_0)$. Thus, the effective pole-placement region is the symmetric intersection in Fig.~\ref{fg:half plane}(b):
\begin{equation}\label{eq:simple symmetric region}
\mathcal{D}(\theta_0, \omega_0, \sigma_0) \triangleq \mathcal{D}_0(\theta_0, \omega_0, \sigma_0) \cap \mathcal{D}_0(-\theta_0, -\omega_0, \sigma_0)
\end{equation}
Varying $(\theta_0, \omega_0, \sigma_0)$ enables $\mathcal{D}$ to represent various performance-oriented regions, as shown in Fig.~\ref{fg:various D half plane}:

\subsubsection{Shifted left half-plane}
Set $\theta_0=\omega_0=0,\sigma_0=\alpha<0$:
\begin{equation}
\mathcal{D}_\mrm{LHP}(\alpha)\triangleq \calD(0,0,\alpha)
\end{equation}
$\alpha<0$ enforces a strict stability margin and is linked to the settling time $T_s \approx 4/|\alpha|$; $\alpha=0$ corresponds to the cLHP and the standard stability requirement.

\subsubsection{Sector}
To ensure a minimum damping ratio $\cos\beta,\beta\in(0,\pi/2)$, we set $\theta_0=\frac{\pi}{2} - \beta,\omega_0=\sigma_0=0$:
\begin{equation}
\calD_\mrm{SEC}(\beta)\triangleq \calD(\frac{\pi}{2} - \beta,0,0).
\end{equation}
$\calD_\mrm{SEC}(\beta)$ is a sector region centered on the negative real axis.

\subsubsection{Horizontal strip}
To limit the natural frequency to a maximum value $\gamma$ rad/s, we set $\theta_0=\frac{\pi}{2},\omega_0=\gamma>0,\sigma_0=0$:
\begin{equation}
\calD_\mrm{HS}(\gamma)\triangleq\calD(\pi/2,\gamma,0).
\end{equation}
$\calD_\mrm{HS}(\gamma)$ is a horizontal strip symmetric about the real axis, ensuring $\im{s} \leq \gamma$.


As established, confidentiality barriers prohibit centralized state-space designs that require shared subsystem models, while the absence of communication links precludes distributed state exchange. Consequently, $\mathcal{D}$-stability should be certified in a decentralized manner.

\textbf{Problem 1. (Decentralized $\mathcal{D}$-Stability):} Derive a decentralized condition for $\mathcal{D}$-stability, relying solely on the local frequency responses $G_k(s)$ and the structural property of $Y$.

\begin{figure}[t]
    \centering
\includegraphics[width=0.85\linewidth]{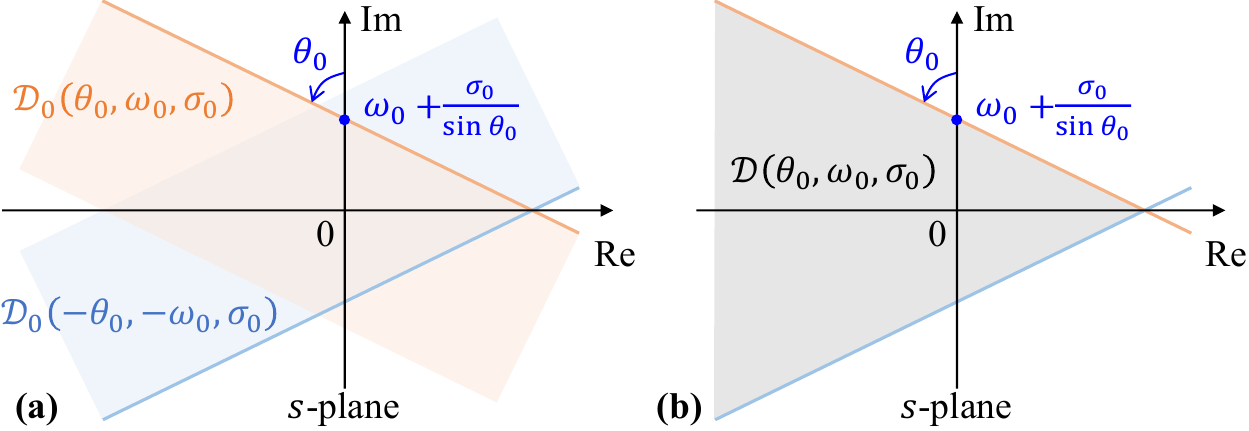}
    \caption{(a) Half plane $\mathcal{D}_0(\theta_0, \omega_0, \sigma_0)$, $\mathcal{D}_0(-\theta_0, -\omega_0, \sigma_0)$ and (b) their intersection $\mathcal{D}(\theta_0, \omega_0, \sigma_0)$.}
    \label{fg:half plane}
\end{figure}

\begin{figure}[t]
    \centering
\includegraphics[width=0.85\linewidth]{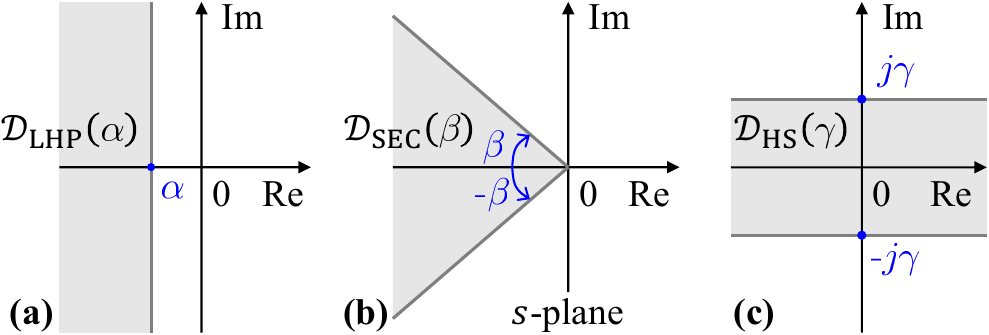}
    \caption{Pole regions for various dynamic performance.}
    \label{fg:various D half plane}
\end{figure}

\section{Decentralized $\calD$-Stability Analysis}
\subsection{Domain Mapping and Real-Equivalent Representation}


To extend standard stability analysis to $\calD$-stability, we first introduce a \textbf{complex mapping}:
\begin{equation}\label{eq:mapping}
\nu \mapsto s(\nu) = e^{j\theta_0}(\nu + \sigma_0) + j\omega_0  
\end{equation}
which bijectively maps the target region $\mathcal{D}$ in the $s$-domain to the cLHP in an auxiliary $\nu$-domain, as shown in Fig.~\ref{fg:affine mapping}.

\begin{figure}[t]
    \centering
\includegraphics[width=0.9\linewidth]{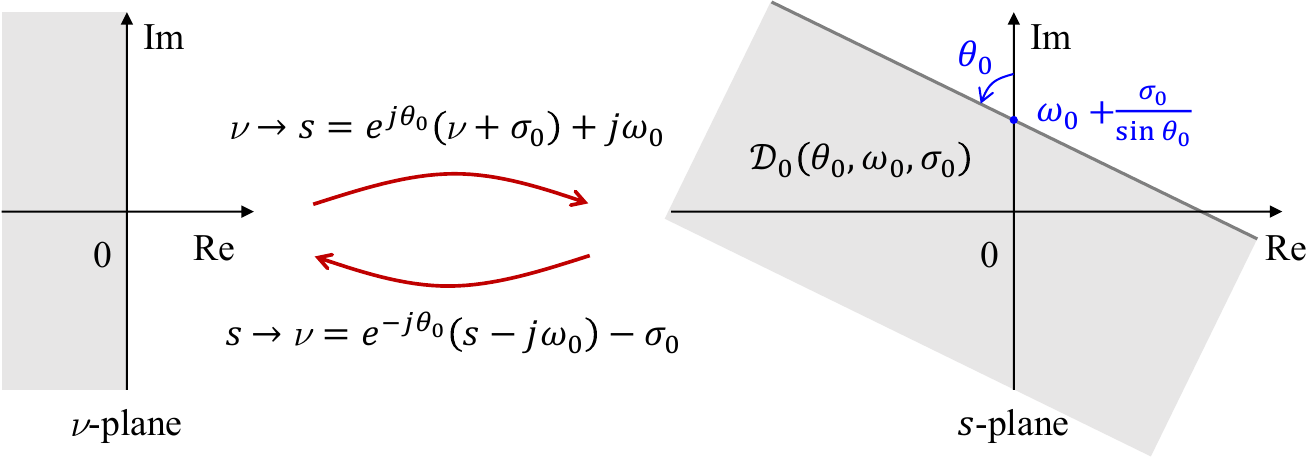}
    \caption{Mapping $\mathcal{D}_0$ in the $s$-domain to the cLHP in the $\nu$-domain.}
    \label{fg:affine mapping}
\end{figure}

\begin{proposition}\label{prop:transformation}
The interconnected system~\eqref{eq:subsystem}-\eqref{eq:network} is $\mathcal{D}$-stable w.r.t. $\mathcal{D}(\theta_0, \omega_0, \sigma_0)$ if and only if all roots of the mapped characteristic equation $\det\big(I+G(s(\nu))Y\big)$ satisfy $\re{\nu} < 0$.
\end{proposition}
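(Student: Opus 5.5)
The plan is to reduce $\mathcal{D}$-stability to a statement about the single half-plane $\mathcal{D}_0(\theta_0,\omega_0,\sigma_0)$, and then transport that half-plane through the affine map \eqref{eq:mapping}. Throughout, I identify the closed-loop poles with the roots of \eqref{eq:char_eq}, as the paper does.

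\emph{Step 1 (properties of the map).} The map $\nu\mapsto s(\nu)=e^{j\theta_0}(\nu+\sigma_0)+j\omega_0$ is affine with nonzero slope $e^{j\theta_0}$. Hence it is a bijection of $\mathbb{C}$, with inverse $\nu(s)=e^{-j\theta_0}(s-j\omega_0)-\sigma_0$. Substituting the inverse gives
\[
\re{\nu(s)}=\re{e^{-j\theta_0}(s-j\omega_0)}-\sigma_0 .
\]
Therefore $s\in\mathcal{D}_0(\theta_0,\omega_0,\sigma_0)$ if and only if $\re{\nu(s)}\le 0$, and $s$ lies in the interior of $\mathcal{D}_0$ if and only if $\re{\nu(s)}<0$. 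Root sets also correspond: $\nu^\ast$ is a root of $\det\big(I+G(s(\nu))Y\big)$ exactly when $s(\nu^\ast)$ is a root of \eqref{eq:char_eq}. The reason is that $s(\cdot)$ is a bijection and composing a rational function with an affine map preserves pole and zero locations up to relabeling. For the same reason, multiplicities are preserved, since the derivative $e^{j\theta_0}$ never vanishes.

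\emph{Step 2 (conjugate symmetry collapses the intersection).} I will show that all poles lie in $\mathcal{D}$ if and only if all poles lie in $\mathcal{D}_0(\theta_0,\omega_0,\sigma_0)$.
\begin{itemize}
\item The forward direction is immediate, because $\mathcal{D}\subseteq\mathcal{D}_0(\theta_0,\omega_0,\sigma_0)$.
\item For the reverse direction, take any pole $s_p$. Since $G$ and $Y$ are real, $\overline{\det(I+G(s)Y)}=\det(I+G(\bar s)Y)$, so $\bar s_p$ is also a pole. By hypothesis $\bar s_p\in\mathcal{D}_0(\theta_0,\omega_0,\sigma_0)$.
\item Conjugating the defining inequality gives $\re{e^{-j\theta_0}(\bar s_p-j\omega_0)}=\re{e^{j\theta_0}(s_p+j\omega_0)}$. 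Hence $\bar s_p\in\mathcal{D}_0(\theta_0,\omega_0,\sigma_0)$ if and only if $s_p\in\mathcal{D}_0(-\theta_0,-\omega_0,\sigma_0)$.
\item So $s_p$ lies in both half-planes, that is, $s_p\in\mathcal{D}$.
\end{itemize}
Combining Steps 1 and 2 yields the equivalence stated in the proposition.

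\emph{Main obstacle (boundary convention).} Step 1 maps $\mathcal{D}_0$, defined with ``$\le$'', onto the \emph{closed} half-plane $\re{\nu}\le 0$, whereas the proposition is stated with the strict condition $\re{\nu}<0$. I will resolve this by reading $\mathcal{D}$-stability in the strict sense, i.e., poles in the interior of $\mathcal{D}$. This matches the standard convention for $\alpha=0$, where the condition reduces to asymptotic stability. Under that reading, the interior version of Step 2 goes through verbatim, with strict inequalities replacing non-strict ones.

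If the closed-region reading is intended instead, the proof is the same with ``$\le$'' in place of ``$<$''. I will state this convention explicitly so that the ``if and only if'' is exact.

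A secondary point is that ``poles'' must mean the roots of \eqref{eq:char_eq}, i.e., no hidden pole-zero cancellations. I will take this as given, consistent with the paper's definition of the closed-loop poles.
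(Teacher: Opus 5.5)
Your proposal is correct and follows the same route as the paper: invert the affine map to identify $\mathcal{D}_0(\theta_0,\omega_0,\sigma_0)$ with the half-plane $\re{\nu}\le 0$, then use the conjugate symmetry of the real-rational closed loop to pass from $\mathcal{D}_0$ to the intersection $\mathcal{D}$. Your explicit handling of the strict-versus-closed boundary is a real refinement, since the paper's own argument establishes the equivalence with $\re{\nu}\le 0$ while the statement asserts $\re{\nu}<0$.
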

\textit{Proof:} We have $\nu\! = e^{-j\theta_0}(s - j\omega_0) - \sigma_0$ by \eqref{eq:mapping}. The condition $\re{\nu} \leq 0$ is algebraically equivalent to $\re{e^{-j\theta_0}(s - j\omega_0)} \leq \sigma_0$, which matches the definition of $\mathcal{D}_0$ in \eqref{eq:calD_0}. Due to the conjugate symmetry of the real rational system, poles lying in $\mathcal{D}_0$ inherently lie in the symmetric intersection $\mathcal{D}$.\hfill$\square$

\begin{remark}
This mapping-based analysis trivially extends to regions formed by the intersection of multiple sub-regions, $\mathbf{D}=\cap_{k=1}^n\mathcal{D}_k(\theta_{0,k}, \omega_{0,k}, \sigma_{0,k})$, by requiring the roots to satisfy $\re{\nu} < 0$ across all $n$ mapped characteristic equations $\det\big(I+G(s_k(\nu))Y\big)$ simultaneously. Given this compositional property, the subsequent analysis focuses on the single symmetric region $\mathcal{D}$ in \eqref{eq:simple symmetric region}.
\end{remark}

With the $\mathcal{D}$-stability problem transformed into a standard cLHP stability problem in the $\nu$-domain, we aim to use positive realness (PR) theory for decentralized certification. PR provides a compositional framework where the stability of an interconnected system can be deduced from the local properties of its individual subsystems \cite{khalil2002nonlinear}.

\begin{definition}\label{def: passivity}\cite[Def.~6.4]{khalil2002nonlinear} A real-rational proper transfer function $M(s) \in \mathbb{R}^{m\times m}(s)$ is PR if and only if:
\begin{enumerate}[label={\alph*)}]
\item All poles of $M(s)$ are in $\re{s}\leq0$.
\item $M(j\omega)+M^\top(-j\omega)\succeq 0$ for all real $\omega$ for which $j\omega$ is not a pole of any element of $M(s)$.
\item Any pure imaginary pole $j\omega$ of any element of $M(s)$ is a simple pole, and the residue $\lim_{s\rightarrow j\omega}(s-j\omega)M(s)\succeq 0$.
\end{enumerate}
\end{definition}

However, applying PR criteria directly to the mapped system is problematic due to a phase distortion. Take, for example, a strictly proper and PR system $G(s)$ with minimal realization $(A,B,C,D)$. The high-frequency response behaves asymptotically as $G(s) \sim s^{-1}CB$. Under the mapping \eqref{eq:mapping}, this asymptote transforms into $G(s(\nu)) \sim e^{-j\theta_0}\nu^{-1}CB$, introducing a phase lag of $\theta_0$ compared to an integrator. This asymptotic rotation inevitably shifts the frequency response outside the valid PR sector $[-\pi/2, \pi/2]$. To restore the PR-like structure, a compensation angle $\phi_k=\theta_0$ can be introduced, yielding $e^{j\phi_k}G(s(\nu)) \sim \nu^{-1}CB$. Thus, we introduce a diagonal \textbf{angle compensation matrix} $\Phi = \diag{\phi_1, \dots, \phi_N} \in \mathbb{R}^N$ and insert the identity $e^{j\Phi \otimes I_m} e^{-j\Phi \otimes I_m} = I$ into the mapped characteristic equation:
\begin{equation} \label{eq:rotated_det}
\det\Big(I + \underbrace{e^{j\Phi \otimes I_m} G(s(\nu))}_{\hat{G}(\nu)} \cdot \underbrace{ e^{-j\Phi \otimes I_m} Y}_{\hat{Y}}\Big) = 0
\end{equation}
where $\hat{G}(\nu) = \diag{\hat{G}_1(\nu), \dots, \hat{G}_N(\nu)}$ with locally rotated subsystems $\hat{G}_k(\nu) = e^{j\phi_k} G_k(s(\nu))$, and $\hat{Y} = e^{-j\Phi\otimes I_m} Y$ is the rotated network matrix.

Furthermore, the mapping $\nu \mapsto s$ yields \textit{complex-coefficient} transfer functions $\hat{G}_k(\nu)$, making classical PR criteria for real-rational systems inapplicable. To address this, we define a \textbf{real-equivalent operator}. For any complex-coefficient rational transfer function $M(\nu) \in \mathbb{C}^{m \times m}(\nu)$, let $M(\nu) = M_\mrm{re}(\nu) + j M_\mrm{im}(\nu)$, where $M_\mrm{re}, M_\mrm{im} \in \mathbb{R}^{m \times m}(\nu)$, its real-equivalent $\la M\ra \in \mathbb{R}^{2m \times 2m}(\nu)$ is:
\begin{equation}\nonumber\label{eq:real-equivalent operator}
\la M\ra(\nu) \triangleq 
I_2 \otimes M_\mrm{re}(\nu) + J\otimes M_\mrm{im}(\nu)
=\begin{bmatrix} 
M_\mrm{re}(\nu) & -M_\mrm{im}(\nu) 
\\ M_\mrm{im}(\nu) & M_\mrm{re}(\nu) 
\end{bmatrix}
\end{equation}
where $J=\begin{bmatrix} 0 & -1 \\ 1 &0 \end{bmatrix}$.
Applying this to the characteristic equation~\eqref{eq:rotated_det} translates the $\mathcal{D}$-stability problem into a real-valued equivalent:
\begin{equation}\label{eq:real mapped system}
\det(I + \calG(\nu)\calY) = 0
\end{equation}
Here, $\mathcal{G}(\nu) = \diag{\la\hat{G}_1\ra(\nu), \dots, \la\hat{G}_N\ra(\nu)}$ and $\calY=\hat Y_\mrm{re}\otimes I_2 + \hat Y_\mrm{im}\otimes J=\diag{I_m\otimes\la e^{-j\phi_1}\ra,\dots,I_m\otimes\la e^{-j\phi_N}\ra} (Y \otimes I_{2})$, where $\hat Y_\mrm{re}, \hat Y_\mrm{im}\in \mathbb{R}^{mN\times mN}$ satisfying $\hat Y=\hat Y_\mrm{re}+j \hat Y_\mrm{im}$. This allows us to use classical PR theorems to certify $\mathcal{D}$-stability.

\subsection{Main Result: Decentralized $\mathcal{D}$-Stability Condition}

While the PR property of $\langle \hat{G}_k \rangle(\nu)$ can be verified by solving LMIs via the KYP Lemma \cite{brogliato2020dissipative}, dimensional expansion to $2m \times 2m$ obscures the original input-output structure of the physical system.
The following lemma formulates this verification directly in the $m \times m$ complex domain, preserving analytical compactness. This reduction is particularly advantageous for SISO systems ($m=1$), as it simplifies to a scalar frequency-domain check and facilitates decentralized design.

\begin{lemma}\label{lem:positive function}
The real-equivalent operator $\la\hat G_k\ra(\nu)\in\mathbb{R}^{2m\times 2m}$ is PR if and only if the complex-coefficient system $\hat{G}_k(\nu) \in \mathbb{C}^{m\times m}(\nu)$ is a \textbf{positive transfer function}, i.e., it satisfies:
\begin{enumerate}[label=\alph*)]
\item All poles of $\hat G_k(\nu)$ are in $\re{\nu}\leq0$.
\item $\hat G_k(j\omega)+\hat G_k^\HT(j\omega)\succeq 0$ for all real $\omega$ for which $j\omega$ is not a pole of any element of $\hat G_k(\nu)$.
\item Any pure imaginary pole $j\omega$ of any element of $\hat G_k(\nu)$ is a simple pole, and the residue $\lim_{\nu\rightarrow j\omega}(\nu-j\omega)\hat G_k(\nu)\succeq 0$.
\end{enumerate}
\end{lemma}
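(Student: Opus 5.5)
The plan is to diagonalize the real-equivalent operator by a constant unitary similarity that splits it into $\hat G_k$ and a "conjugate" copy, and then to check the three PR conditions of Definition~\ref{def: passivity} one at a time. Let $M=\hat G_k$ with $M(\nu)=M_\mrm{re}(\nu)+jM_\mrm{im}(\nu)$, where $M_\mrm{re},M_\mrm{im}$ are real rational. Introduce the conjugate-coefficient function $M^\ast(\nu)\triangleq M_\mrm{re}(\nu)-jM_\mrm{im}(\nu)$, so that $M^\ast(\nu)=\overline{M(\bar\nu)}$. Take the constant unitary matrix $T=\frac{1}{\sqrt2}\begin{bmatrix}1&1\\-j&j\end{bmatrix}\otimes I_m$. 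A direct computation, using that $J$ has eigenvalues $\pm j$, gives
\begin{equation}\nonumber
T^\HT\la M\ra(\nu)T=\diag{M(\nu),\,M^\ast(\nu)} .
\end{equation}
Since $T$ is constant and unitary, poles, residues and Hermitian parts all transform by congruence. Every PR property of $\la M\ra$ can therefore be read off from $M$ and $M^\ast$. Note also that $\la M\ra$ is real rational and proper exactly when $M$ is proper, so it lies in the class covered by Definition~\ref{def: passivity}.

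\textbf{Condition a).} The poles of $\la M\ra$ are the union of the poles of $M$ and of $M^\ast$. The poles of $M^\ast$ are the conjugates of the poles of $M$. Conjugation preserves the real part, so all poles of $\la M\ra$ lie in $\re{\nu}\le0$ if and only if all poles of $M$ do.

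\textbf{Condition b).} Because $\la M\ra$ is real, $\la M\ra^\top(-j\omega)=\overline{\la M\ra(j\omega)}^{\top}=\la M\ra(j\omega)^\HT$. Hence the PR frequency condition is $\la M\ra(j\omega)+\la M\ra(j\omega)^\HT\succeq0$. By the similarity this is equivalent to $M(j\omega)+M^\HT(j\omega)\succeq0$ together with $M^\ast(j\omega)+M^\ast(j\omega)^\HT\succeq0$. Using $M^\ast(j\omega)=\overline{M(-j\omega)}$, the second inequality is the complex conjugate of $M(-j\omega)+M^\HT(-j\omega)\succeq0$. A Hermitian matrix is positive semidefinite if and only if its conjugate is, so the second inequality is the first one evaluated at $-\omega$. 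Quantifying over all real $\omega$ that are not poles, the two conditions coincide.

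\textbf{Condition c).} This is the step needing the most care: the bookkeeping of imaginary-axis poles. An imaginary pole $j\omega_0$ of $\la M\ra$ is a pole of $M$ at $j\omega_0$ or of $M^\ast$ at $j\omega_0$, and the latter holds if and only if $M$ has a pole at $-j\omega_0$. Simplicity is preserved because the pole orders of $M^\ast$ at $\bar p$ equal those of $M$ at $p$. For residues, $\lim_{\nu\to j\omega_0}(\nu-j\omega_0)M^\ast(\nu)$ equals $\overline{\lim_{\mu\to -j\omega_0}(\mu+j\omega_0)M(\mu)}$, via the substitution $\nu=\bar\mu$. This quantity is the conjugate of the residue of $M$ at $-j\omega_0$. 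The residue of $\la M\ra$ at $j\omega_0$ is $T$ times the block diagonal of these two residues times $T^\HT$. It is therefore $\succeq0$ if and only if both $M$-residues, at $j\omega_0$ and at $-j\omega_0$, are $\succeq0$ (where they exist). Ranging over all imaginary poles gives exactly condition c) of the lemma, in both directions.

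The only delicate point overall is keeping the pole sets of $M$ and $M^\ast$ correctly paired as $\omega\leftrightarrow-\omega$, in both b) and c). Everything else is routine algebra with the unitary $T$.
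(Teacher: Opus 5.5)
Your route is essentially the paper's: block-diagonalize $\la M\ra$ with a constant unitary built from the eigenvectors of $J$. Applying it once to the whole transfer function, $T^\HT\la M\ra T=\diag{M,M^\ast}$, is a tidy way to organize this. Your treatment of a) and c) is correct. In c) you are in fact more precise than the paper, since you note that the residue of $\la M\ra$ at $j\omega_0$ involves the residues of $M$ at both $j\omega_0$ and $-j\omega_0$, with zero residue wherever $M$ is analytic.

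The gap is in b), at the sentence saying that, quantified over non-pole frequencies, the two conditions coincide. They are quantified over different sets.

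\emph{What your argument proves.} PR of $\la M\ra$ only constrains those $\omega$ for which $j\omega$ is not a pole of $\la M\ra$, i.e.\ for which neither $j\omega$ nor $-j\omega$ is a pole of $M$. So your similarity argument shows that PR condition b) is equivalent to $M(j\omega)+M^\HT(j\omega)\succeq 0$ on this smaller, symmetric set.

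\emph{What the lemma requires.} Condition b) of the lemma demands the inequality at every $\omega$ with $j\omega$ not a pole of $M$. That set also contains frequencies $\omega_0$ at which $-j\omega_0$ is a pole of $M$ but $j\omega_0$ is not; for $M(\nu)=1/(\nu+j)$ this is the point $\omega_0=1$. At such an $\omega_0$, $j\omega_0$ is a pole of $\la M\ra$ contributed by $M^\ast$, so PR says nothing there. Hence the implication PR $\Rightarrow$ positive is not yet established.

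\emph{The repair.} This is the step the paper includes. There are only finitely many such $\omega_0$, and $M$ is analytic at each $j\omega_0$. The inequality therefore holds on a punctured neighbourhood of $\omega_0$, and it extends to $\omega_0$ by continuity and closedness of the positive semidefinite cone. The reverse implication in b), positive $\Rightarrow$ PR, is fine as you wrote it.
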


\textit{Proof:} 
Let $\hat G_k(\nu) = \hat G_{r,k}(\nu) + j\hat G_{i,k}(\nu)$ and $\overline{\hat G_k(\bar \nu)}= \hat G_{r,k}(\nu) - j\hat G_{i,k}(\nu)$, where $\hat G_{r,k}(\nu)$ and $\hat G_{i,k}(\nu)$ are real-rational with the same denominator polynomial. By the definition of the operator $\la \cdot \ra$, the poles of $\la \hat G_k \ra (\nu)$ are the roots of the denominator of $\hat G_{r,k}$ and $\hat G_{i,k}$. Since $\hat G_{r,k} = \frac{1}{2}(\hat G_k(\nu) + \overline{\hat G_k(\bar \nu)})$, the denominator of $\la \hat G_k \ra (\nu)$ is the least common multiple of the denominators of $\hat G_k(\nu)$ and $\overline{\hat G_k(\bar \nu)}$. Let $\mathcal{P}_{cp}$ be the set of poles of $\hat G_k(\nu)$; then the poles of $\overline{\hat G_k(\bar \nu)}$ are the complex conjugates of $\mathcal{P}_{cp}$, denoted as $\overline{\mathcal{P}}_{cp} = \{ \bar{p} \mid p \in \mathcal{P}_{cp} \}$. Consequently, the pole set of $\la \hat G_k \ra (\nu)$ is $\mathcal{P}_{re} = \mathcal{P}_{cp} \cup \overline{\mathcal{P}}_{cp}$. Regarding the multiplicity of these poles:
\begin{itemize}[leftmargin=*]
\item Self-conjugate pairs: If $\hat G_k(\nu)$ contains a conjugate pole pair $\{p, \bar{p}\}$, these poles are present in both $\mathcal{P}_{cp}$ and $\overline{\mathcal{P}}_{cp}$. Because $\la \hat G_k \ra (\nu)$ is formed via the least common multiple of the denominators (rather than their product), the multiplicity of these poles is not doubled. They remain simple poles in $\la \hat G_k \ra (\nu)$ if and only if they are simple in $\hat G_k(\nu)$.
\item Asymmetric poles: If $\hat G_k(\nu)$ has a pole at $p$ but not at $\bar{p}$ (e.g., $1/(\nu+j)$), the real-equivalent operator ``completes" the pair by introducing $\bar{p}$ via $\overline{\mathcal{P}}_{cp}$. Poles $\{p, \bar{p}\}$ remain simple in $\la \hat G_k \ra (\nu)$ if and only if $p$ is simple in $\hat G_k(\nu)$.
\end{itemize}

\textit{Regarding Condition (a):} Since $\re{p} = \re{\bar{p}}$, all poles in $\mathcal{P}_{re}$ are in the cLHP if and only if all poles in $\mathcal{P}_{cp}$ are in the cLHP. Thus, the stability requirement for positive realness is equivalent to condition (a).

\textit{Regarding Condition (b):} The positive realness of $\la \hat G_k\ra(\nu)$ requires the frequency-domain matrix
$$\Psi(j\omega)\triangleq \la \hat G_k\ra(j\omega) + \la \hat G_k\ra^\top(-j\omega)\succeq0$$ 
for all $\omega \in \Omega_{re}$, where $\Omega_{re} = \{\omega \in \mathbb{R} \mid j\omega \notin \mathcal{P}_{re}\}$ is the set of real frequencies excluding the poles of $\la \hat G_k \ra(\nu)$. Let $A(j\omega) = \hat G_{r,k}(j\omega) + \hat G_{r,k}^\top(-j\omega)$ and $B(j\omega) = \hat G_{i,k}(j\omega) - \hat G_{i,k}^\top(-j\omega)$.
Since $\hat G_{r,k}$ and $\hat G_{i,k}$ are real-coefficient, we have $A=A^\HT$ and $B=-B^\HT$, making $\Psi(j\omega)$ a Hermitian matrix:
$$\Psi(j\omega) = \begin{bmatrix} A(j\omega) & -B(j\omega) \\ B(j\omega) & A(j\omega) \end{bmatrix}$$
Using the unitary matrix $U = \frac{1}{\sqrt{2}} \begin{bmatrix} I & I \\ jI & -jI \end{bmatrix}$, we diagonalize $\Psi$ into two blocks:
\begin{equation}\nonumber
U^H \Psi(j\omega) U =  \begin{bmatrix} D_1(j\omega) & 0 \\ 0 & D_2(j\omega) \end{bmatrix}
\end{equation}
where $D_1(j\omega) = A(j\omega) - jB(j\omega), D_2(j\omega) = A(j\omega) + jB(j\omega)$. Observe that $D_2(-j\omega) = A(-j\omega) + jB(-j\omega) = \overline{A(j\omega) - jB(j\omega)} = \overline{D_1(j\omega)}$. Since $D_1$ is Hermitian, its eigenvalues are real, meaning $\text{eig}(D_1(j\omega)) = \text{eig}(\overline{D_1(j\omega)}) = \text{eig}(D_2(-j\omega))$. Thus, requiring $\Psi(j\omega) \succeq 0$ for all $\omega \in \Omega_{re}$ is identical to $D_2(j\omega) \succeq 0$ for all $\omega \in \Omega_{re}$. Substituting the definitions of $A$ and $B$ yields $D_2(j\omega) = \hat G_k(j\omega) + \hat G_k^\HT(j\omega)$. 

To establish exact equivalence with condition (b), let $\Omega_{cp} = \{\omega \in \mathbb{R} \mid j\omega \notin \mathcal{P}_{cp}\}$ be the set of real frequencies where $\hat G_k(j\omega)$ is analytic. Because $\mathcal{P}_{re} = \mathcal{P}_{cp} \cup \overline{\mathcal{P}}_{cp}$, it follows that $\Omega_{re} \subseteq \Omega_{cp}$. For any isolated frequency $\omega_0 \in \Omega_{cp} \setminus \Omega_{re}$ (i.e., $-j\omega_0$ is a pole of $\hat G_k(\nu)$, but $j\omega_0$ is not), $\hat G_k(j\omega)$ is analytic at $j\omega_0$. Thus, $D_2(j\omega) = \hat G_k(j\omega) + \hat G_k^\HT(j\omega)$ remains well-defined and continuous at $\omega_0$. Since the poles of rational functions are finite in number, $\Omega_{cp} \setminus \Omega_{re}$ consists of only finite isolated points. Because $D_2(j\omega) \succeq 0$ holds for all $\omega$ in a punctured neighborhood of $\omega_0$, taking the limit $\omega \to \omega_0$ guarantees $D_2(j\omega_0) \succeq 0$ due to the closedness of the positive semi-definite cone. Therefore, $\Psi(j\omega) \succeq 0$ for all $\omega \in \Omega_{re}$ if and only if $\hat G_k(j\omega) + \hat G_k^\HT(j\omega) \succeq 0$ for all $\omega \in \Omega_{cp}$.

\textit{Regarding Condition (c):} Suppose $\hat G_k(\nu)$ has a pole at $j\omega_0$. Then $\la \hat G_k \ra(\nu)$ has poles at both $j\omega_0$ and $-j\omega_0$. Let $K = K_{re} + jK_{im}$ be the residue of $\hat G_k$ at $j\omega_0$, where $K_{re}$ and $K_{im}$ are the residues of $\hat G_{r,k}$ and $\hat G_{i,k}$ respectively. The residue matrix of $\la \hat G_k \ra$ at $j\omega_0$ is given by
\begin{equation}\nonumber
\mathcal{K} = \lim_{\nu \to j\omega_0} (\nu - j\omega_0) \la \hat G_k \ra(\nu) = \begin{bmatrix} K_{re} & -K_{im} \\ K_{im} & K_{re} \end{bmatrix}
\end{equation}
 Using the same unitary matrix $U$ yields:
$$U^H \mathcal{K} U = \begin{bmatrix} K_{re} - jK_{im} & 0 \\ 0 & K_{re} + jK_{im} \end{bmatrix}$$
By the same eigenvalue arguments used in condition (b), $\mathcal{K} \succeq 0$ if and only if $K \succeq 0$. By the conjugate symmetry of real systems, the residue condition of $\la \hat G_k \ra$ at $-j\omega_0$ is satisfied automatically if it holds at $j\omega_0$, fulfilling condition (c).\hfill$\square$

\begin{remark}
The conditions in Lem.~\ref{lem:positive function} structurally mirror the standard PR criteria in Def.~\ref{def: passivity}, yet they possess a critical distinction. While PR requires $M(\nu)$ to be a real matrix when $\nu$ is real positive\cite{brune1931synthesis, anderson2013network, brogliato2020dissipative}, the mapped $\hat G_k(\nu)$ fails this due to its complex coefficients. A rational function satisfying the conditions in Lem.~\ref{lem:positive function} without the real-coefficient constraint is formally termed a \textbf{positive function} in classical network theory \cite{belevitch1968classical}. Although both concepts are well-documented, to the best of the authors' knowledge, an explicit proof demonstrating that the PR property of the real-equivalent $\la\hat G_k(\nu)\ra$ is mathematically equivalent to the positivity of the complex-coefficient $\hat G_k(\nu)$ has been absent in the literature. Thus, Lem.~\ref{lem:positive function} provides this formal proof, establishing a rigorous theoretical bridge between the two domains.
\end{remark}

The following theorem establishes the main decentralized synthesis criterion by proving the compositional stability of positive functions under feedback interconnection, thereby extending their application from circuit synthesis to the decentralized certification of interconnected $\mathcal{D}$-stability.


\begin{theorem} \label{thm: D stable with positivity}
System \eqref{eq:subsystem}-\eqref{eq:network} is $\mathcal{D}$-stable w.r.t. the region $\mathcal{D}(\theta_0, \omega_0, \sigma_0)$ if there exists an angle compensation matrix $\Phi$ s.t. the rotated network matrix satisfies $\hat Y + \hat Y^\HT\succeq 0$, and each rotated subsystem $\hat G_k(\nu)$ is a positive function.
\end{theorem}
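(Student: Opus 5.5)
By Proposition~\ref{prop:transformation}, the plan is to reduce the claim to showing that every root $\nu$ of $\det(I+\hat G(\nu)\hat Y)=0$ satisfies $\re{\nu}<0$. We pass to the real-equivalent characteristic equation \eqref{eq:real mapped system}, $\det(I+\calG(\nu)\calY)=0$. The roots of this equation are the roots of the complex equation together with their conjugates, since $\la\cdot\ra$ is unitarily similar to $\mathrm{diag}\{M(\nu),\overline{M(\bar\nu)}\}$ via the matrix $U$ used in the proof of Lemma~\ref{lem:positive function}. Conjugation preserves $\re{\nu}$, so it suffices to prove that the real feedback loop formed by $\calG(\nu)$ and the static gain $\calY$ has all closed-loop poles in the open LHP.

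The first step is to convert the hypotheses into PR statements for this real loop. By Lemma~\ref{lem:positive function}, each $\la\hat G_k\ra$ is PR, so the block-diagonal $\calG$ is PR. For the network, I would note that $\calY=\la\hat Y\ra$ up to a permutation of the interleaving of real and imaginary parts. Then $\calY+\calY^\top$ is unitarily similar to $\mathrm{diag}\{\hat Y+\hat Y^\HT,\overline{\hat Y+\hat Y^\HT}\}$, by the same $U$-block-diagonalization used for condition (b). Hence $\calY+\calY^\top\succeq0$, so the constant gain $\calY$ is PR (passive, with zero dynamics).

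The second step is the stability argument itself. I would invoke the classical passivity theorem for the negative feedback of a PR system with a static passive gain, for instance via storage functions from the KYP lemma applied to a minimal realization of $\calG$. Alternatively, I would argue directly in the frequency domain. Suppose $\nu_p$ with $\re{\nu_p}\ge0$ is a root, and take $x\neq0$ with $(I+\hat G(\nu_p)\hat Y)x=0$. Setting $w=\hat Yx\ne0$ (nonzero since $x=-\hat G\hat Yx$), we get
\[
w^\HT x=-w^\HT\hat G(\nu_p)w .
\]
Take real parts. The left side equals $\re{x^\HT\hat Y^\HT x}=\tfrac12x^\HT(\hat Y+\hat Y^\HT)x\ge0$. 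The right side is $\le0$ because a positive function satisfies $\re{w^\HT\hat G(\nu)w}\ge0$ on $\re{\nu}>0$, which follows from conditions (a)–(c) by the maximum/minimum principle for the harmonic function $\re{w^\HT\hat G w}$. Both sides therefore vanish, and this alone does not produce a contradiction.

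This degenerate case is the main obstacle. Mere positivity of both $\hat G_k$ and $\hat Y+\hat Y^\HT$ yields only non-strict stability, meaning poles in $\re{\nu}\le0$, which via Proposition~\ref{prop:transformation} corresponds to poles in the closed region $\mathcal{D}$. To handle it, I would first try to state the conclusion as poles confined to the closed set $\mathcal{D}$, consistent with $\mathcal{D}_0$ being defined with ``$\le$'', rather than claiming strict stability. I would also note that roots with $\re{\nu_p}>0$ are still excluded whenever the expression $\re{w^\HT\hat G(\nu_p)w}$ is strictly positive. Since a positive function whose real part vanishes at an interior point is lossless and purely imaginary there, strict positivity holds off the imaginary axis except in that degenerate case. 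In that case I would fall back on the standard PR-theorem fact that interconnections of PR blocks with passive static gains have no poles in the open RHP and only simple poles on the axis, which confines all poles to the closed region. If a strict conclusion is required, I would add a mild strictness assumption, namely strict positivity of one side or observability of the loop, to close the argument.
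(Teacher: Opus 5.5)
Your proposal is correct and takes essentially the paper's route. Lemma~\ref{lem:positive function} turns positivity of each $\hat G_k$ and the condition $\hat Y+\hat Y^\HT\succeq 0$ into PR of $\calG$ and of $\calY$ (through its permutation similarity to $\la\hat Y\ra$); the passivity theorem for negative feedback of PR blocks then places all roots of \eqref{eq:real mapped system} in the cLHP, and Prop.~\ref{prop:transformation} maps this back to $\calD$. Your observation that only $\re{\nu}\le 0$ (not $<0$) can be obtained agrees with the paper's own proof, which also concludes ``cLHP'' and relies on $\calD$ being closed, so the strict inequality in Prop.~\ref{prop:transformation} should be read as non-strict and the stated theorem needs no extra strictness assumption.
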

\textit{Proof:} The real-equivalent network $\calY=\hat Y_\mrm{re}\otimes I_2 + \hat Y_\mrm{im}\otimes J$ is permutation similar to $ I_2 \otimes\hat Y_\mrm{re} + J \otimes\hat Y_\mrm{im}=\la\hat Y\ra$. By Lem.~\ref{lem:positive function}, the condition $\hat Y + \hat Y^\HT\succeq 0$ ensures that the constant network matrix $\la\hat Y\ra$, and thus $\calY$, is PR. Similarly, the positivity of $\hat G_k(\nu)$ guarantees that each real-equivalent $\langle \hat{G}_k \rangle(\nu)$ is PR, rendering the block-diagonal $\calG(\nu)$ PR. Since the negative feedback interconnection of two PR operators $\calG(\nu)$ and $\calY$ remains PR \cite{brogliato2020dissipative}, all roots of the mapped equivalent equation \eqref{eq:real mapped system} reside in the cLHP. By Prop.~\ref{prop:transformation}, the original system is $\calD$-stable.\hfill$\square$


\begin{figure}[t]
    \centering
\includegraphics[width=0.5\linewidth]{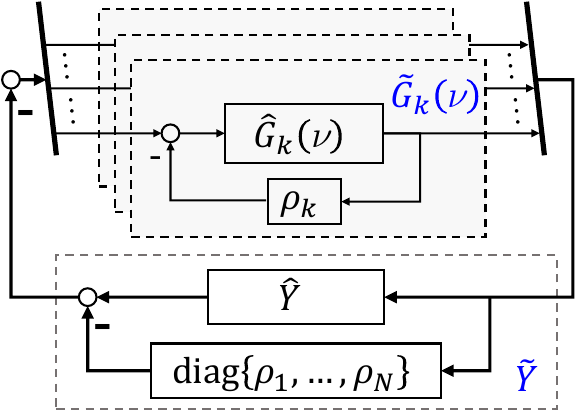}
    \caption{Loop transformation.}
    \label{fg:loop transformation}
\end{figure}

In linear systems theory, PR provides a frequency-domain characterization of passivity. Since Lem.~\ref{lem:positive function} establishes the positive function as the complex-domain counterpart to PR, it serves as the analytical analogue of passivity and shares similar structural properties. Thus, just as the passivity can be modified via loop transformations, the positivity can be externally shaped.
As illustrated in Fig.~\ref{fg:loop transformation}, if a rotated subsystem $\hat{G}_k(\nu)$ lacks inherent positivity, we can apply local feedback gain $\rho_k$. To maintain closed-loop equivalence, this local feedback is nullified by an input feedforward on the network $\hat{Y}$. This reallocates the stabilization burden to an equivalent feedback interconnection between modified subsystems, $\tilde{G}_k(\nu) = [I + \hat{G}_k(\nu)\rho_k]^{-1}\hat{G}_k(\nu)$, and the modified network, $\tilde{Y} = \hat{Y} - \diag{\rho_1, \dots, \rho_N}$. Thus, $\mathcal{D}$-stability of the original system can be guaranteed if the modified entities $\tilde{G}_k(\nu)$ and $\tilde{Y}$ satisfy the positivity conditions in Thm.~\ref{thm: D stable with positivity}.

\begin{remark}
Positivity can be computationally verified using the Generalized KYP Lemma by solving complex LMIs \cite{iwasaki2005generalized}. 
Let $(A_k, B_k, C_k, D_k)$ be a minimal state-space realization of the original subsystem $G_k(s)$. Substituting the mapping $s(\nu) = e^{j\theta_0}(\nu + \sigma_0) + j\omega_0$ into the transfer function yields:
\begin{equation}\nonumber
\begin{aligned}
G_k(s(\nu)) &= C_k \big( s(\nu)I - A_k \big)^{-1}B_k + D_k
\\&= e^{-j\theta_0} C_k(\nu I - A_c)^{-1}B_k + D_k 
\end{aligned}
\end{equation}
where $A_c = e^{-j\theta_0}(A_k - j\omega_0 I) - \sigma_0 I$. Consequently, the state-space realization of the rotated system $\hat{G}_k(\nu) = e^{j\phi_k}G_k(s(\nu))$ evaluates to $(A_c, B_k, e^{j(\phi_k - \theta_0)}C_k, e^{j\phi_k}D_k)$. By selecting the compensation angle $\phi_k = \theta_0$, the realization simplifies to $(A_c, B_k, C_k, e^{j\theta_0}D_k)$. In this case, $\hat{G}_k(\nu)$ is a positive function if and only if there exists a Hermitian matrix $P = P^\HT \succ 0$ satisfying the complex LMI:
\begin{equation} \nonumber
\begin{bmatrix}
A_c^\HT P + P A_c & P B_k - C_k^\top \\
B_k^\top P - C_k & -(e^{j\theta_0}D_k + e^{-j\theta_0}D_k^\top)
\end{bmatrix} \preceq 0
\end{equation}
Furthermore, for SISO systems, positivity can be verified alternatively via a graphical method \cite{peng2026positive}. For instance, to guarantee positivity of the modified subsystem $\tilde{G}_k(\nu)$: i) the Nyquist plot of $\rho_k \hat G_k(j\omega)$ must satisfy the standard Nyquist criterion to ensure condition (a) in Lem.~\ref{lem:positive function}; and ii) the Nyquist plot of $\hat G_k(j\omega)$ must reside within a closed disk centered at $(1/(2\rho_k), j0)$ with radius $1/(2\rho_k)$, without intersecting the point $(1/\rho_k, j0)$, to satisfy conditions (b) and (c) in Lem.~\ref{lem:positive function}. 
\end{remark}

\section{Application to DC Microgrids}\label{section: Application to DCMG}

\subsection{DC Microgrid Modeling}
The DCMG comprises $N$ nodes interconnected via a transmission line network, as shown in Fig.~\ref{fg:DCMG in v domain}(a). Nodes $\mathscr{V} = \{1, \dots, N\}$ are categorized into source nodes $\mathscr{V}^\ms$ (connected with an energy storage system (ESS) or a photovoltaic array (PV)) and load nodes $\mathscr{V}^\ml$ (connected with a constant power load (CPL)). Depending on the battery voltage $E$ relative to the network voltage, the interface converter of ESS can be of either boost or buck type. For each node $k\in\mathscr{V}$, $u_k$ and $i_k$ denote the node voltage and the current injected into the network, respectively. 


The transmission network is modeled as a linear resistive grid linking injected currents $\bm{i}$ and node voltages $\bm{u}$\cite{wu2024region}:
\begin{equation}\label{eq:netwrok Y}
\bm i=
\begin{bmatrix}
  \bm i^\ms  
\\\bm i^\ml  
\end{bmatrix}
=Y \bm u=
\begin{bmatrix}
  Y^\mrm{ss} & Y^\mrm{sl}  
\\ Y^\mrm{ls}& Y^\mrm{ll}  
\end{bmatrix}
\begin{bmatrix}
  \bm u^\ms
\\\bm u^\ml
\end{bmatrix}
\end{equation}
where $Y$ is the admittance matrix. $\bm i^\ms$, $\bm u^\ms$ and $\bm i^\ml$, $\bm u^\ml$ denote variables for source and load nodes, respectively.

\begin{figure}[t]
    \centering
\includegraphics[width=0.99\linewidth]{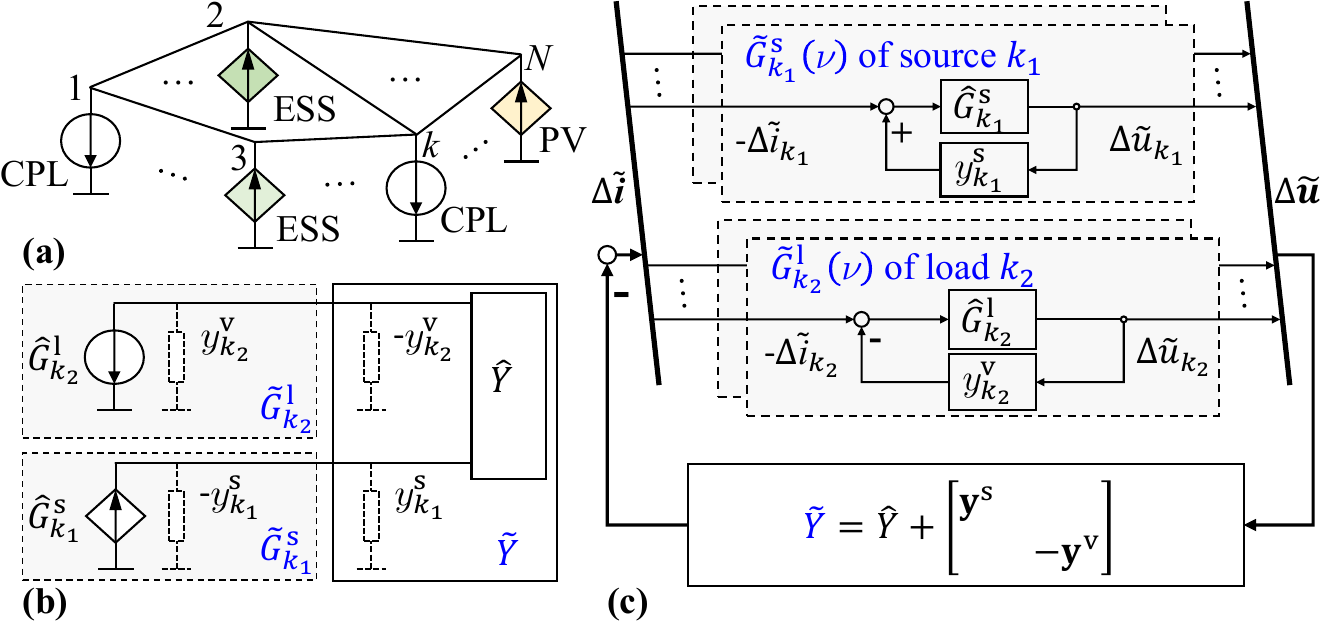}
\vspace{-1em}
    \caption{(a) DCMG configuration. (b) Loop transformation via virtual admittance. (c) Equivalent interconnection of the modified devices and network.}
    \label{fg:DCMG in v domain}
\end{figure}

\begin{figure}[t]
    \centering
\includegraphics[width=0.6\linewidth]{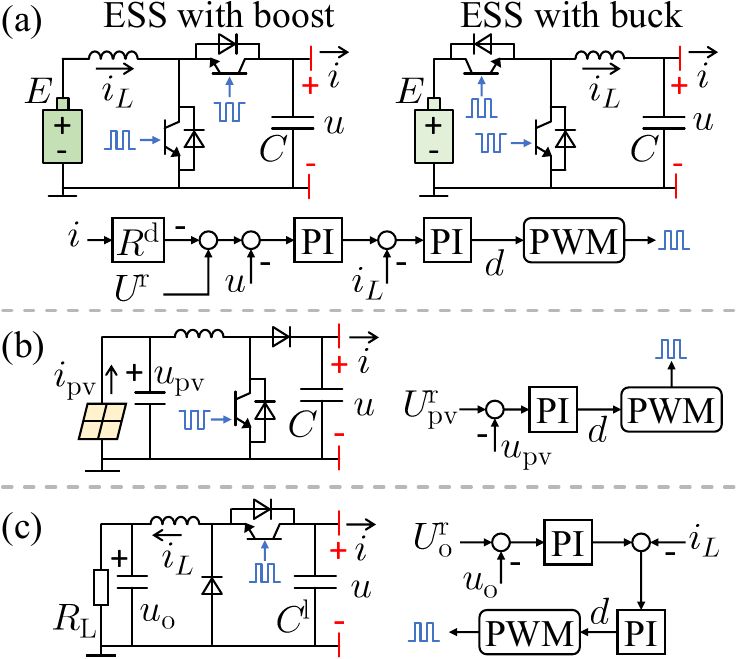}
    \caption{Port converters and controllers for (a) ESS, (b) PV, and (c) CPL.}
    \label{fg:ESS PV CPL}
\end{figure}

For small-signal analysis, the device transfer function $G_k(s)$ maps the negative current increment $-\Delta i_k$ (input) to the voltage increment $\Delta u_k$ (output) around an equilibrium $(i_k^*,u_k^*)$. We focus on the dominant grid-level voltage dynamics and neglect the high-bandwidth inner-loop current dynamics. This model reduction is justified by singular perturbation theory \cite{khalil2002nonlinear} and experimentally validated in \cite{jiang2025mode,jiang2025state}.

Transfer functions of source devices (ESS-boost, ESS-buck, and PVs) under PI control can be unified into a generic strictly proper second-order form $G(s) = \frac{c_1 s + c_0}{s^2 + d_1 s + d_0}$, with $c_1>0, c_0 > 0, d_1, d_0 \in \mathbb{R}$ determined by hardware parameters and controller gains, as summarized Table~\ref{tab:parameter_mapping}.

The CPL dynamics yields $G^\ml(s) = (C^\ml s - y^\ml)^{-1}$ \cite{liu2021comprehensive}, where $-y^\ml=\frac{-(U^\mr_\mrm{o})^2}{R_\mrm{L}(u^*)^2} < 0$ represents the incremental negative conductance, dictated by the consumed load power ${(U^\mr_\mrm{o})^2}/{R_\mrm{L}}$.

\begin{table}[t]
\centering
\begin{threeparttable}
\caption{Mapping from Physical Parameters to Generic Coefficients}
\label{tab:parameter_mapping}
\footnotesize
\begin{tabular}{lcccc}
\toprule
Device & $c_1$ & $c_0$ & $d_1$ & $d_0$ \\
\midrule
ESS-boost & $\frac{E k_\mP^u R^\md + u^*}{C u^*}$ & $\frac{E R^\md k_\mI^u}{C u^*}$ & $\frac{U^\mr - u^* + E R^\md k_\mP^u}{C u^* R^\md}$ & $\frac{E k_\mI^u}{C u^*}$ \\
\addlinespace[4pt]
ESS-buck & $\frac{R^\md k_\mP^u + 1}{C}$ & $\frac{R^\md k_\mI^u}{C}$ & $\frac{k_\mP^u}{C}$ & $\frac{k_\mI^u}{C}$ \\
\addlinespace[4pt]
PV & $\frac{k_\mP^u u^* + 1}{C_{eq}}$ & $\frac{k_\mI^u u^*}{C_{eq}}$ & $\frac{a}{C_{eq}}$ & $\frac{i_\mpv^* k_\mI^u U_\mpv^\mr}{u^* C_{eq}}$ \\
\bottomrule
\end{tabular}
\begin{tablenotes}[flushleft]
\item *Note: $C$, $(k_\mP^u,k_\mI^u)$, and $R^\md$ denote the node capacitance, voltage-loop PI gains, and droop coefficient, respectively. $U^\mr$ and $U^\mr_\mpv$ are voltage references. $C_{eq} \triangleq C(k_\mP^u u^* + 1)$. $i_\mpv^*$ is the steady-state PV current. The coefficient $a=C k_\mI^u u^* + i_\mpv^* k_\mP^u \frac{U_\mpv^\mr}{u^*} - (\frac{U_\mpv^\mr}{u^*})^2 g_\mpv^*$, where $g_\mathrm{pv}^* = \frac{\mrm{d}i_\mpv}{\mrm{d}u_\mpv} \big|_{u_\mpv=U^\mr_\mpv}$ is the incremental conductance of the PV I-V curve.
\end{tablenotes}
\end{threeparttable}
\end{table}

\subsection{Loop Transformation and Uniform Network Bound}
The negative conductance of CPLs typically violates the positivity requirement. With the compensation angle $\phi_k = \theta_0$, the rotated CPL transfer function is $\hat G^\ml_k(\nu)=[C^\ml_k (\nu - \nu_p)]^{-1}$, where its pole $\nu_p = -\sigma_0 + \frac{y_k^\mrm{l}}{C^\ml_k} e^{-j\theta_0} - \omega_0 e^{j(\frac{\pi}{2}-\theta_0)} $ explicitly depends on the load power (via $y_k^\ml$) and the target region $\mathcal{D}$. For typical performance regions ($\theta_0\in[0,\frac{\pi}{2}], \omega_0\ge0, \sigma_0 \le 0$), $\re{\nu_p}$ is frequently positive, rendering $\hat G^\ml_k(\nu)$ a non-positive function. Unlike source devices with tunable control loops, CPLs represent unregulated power demands. Their incremental negative conductance $-y^\ml_k$ is strictly dictated by the required load power, meaning their physical parameters cannot be arbitrarily tuned to relocate this unstable pole. 
This fundamental physical inflexibility motivates the use of loop transformations to externally neutralize the non-positivity.

We introduce a virtual admittance $y^\mv_k$ parallel to each CPL (Fig.~\ref{fg:DCMG in v domain}(b)), which conceptually serves as the local feedback gain $\rho_k = y^\mv_k$ to neutralize the unstable portion of $\nu_p$:
\begin{equation}
y^\mv_k=C^\ml_k\re{\nu_p}
=-C^\ml_k\sigma_0 + y_k^\mrm{l}\cos{\theta_0} - C^\ml_k\omega_0 \sin{\theta_0}
\end{equation}
The \textit{modified} CPL transfer function becomes $\tilde{G}^\mrm{l}_k(\nu)=\frac{\hat G^\ml_k(\nu)}{1+y^\mv_k \hat G^\ml_k(\nu)}=\frac{1}{C^\ml_k (\nu - j\im{\nu_p})}$, which is now a positive function, as its simple pole lies on the imaginary axis with a positive residue $1/C_k^\ml$, and $\tilde G^\ml_k(j\omega)+\tilde G^\ml_k(j\omega)^\HT=0, \forall \omega\ne\im{\nu_p}$.

Concurrently, to ensure the network matrix retains positivity after nullifying the CPL feedback, each source device $k \in \mathscr{V}^\mrm{s}$ (with a compensation angle $\phi_k = \theta_0$) is engineered to contribute a \textbf{positivity index} $y^\ms_k \in \mathbb{R}$ (acting equivalently as $\rho_k = -y^\ms_k$). The corresponding \textit{modified} source is:
\begin{equation}
\tilde{G}^\mrm{s}_k(\nu;y^\ms_k)=[1-y^\mrm{s}_k \hat G^\mrm{s}_k(\nu) ]^{-1}\hat G^\mrm{s}_k(\nu),~\forall k\in \mathscr{V}^\mrm{s}.
\end{equation}
Shifted parameters $-y^\mv_k$ and $y^\ms_k$ are absorbed into the network matrix via input feedforward. The \textit{modified network} $\tilde{Y}$ is:
\begin{equation}\label{eq:modified Y LHP}
\tilde{Y}(\mbf{y}^\ms,-\mathbf{y}^\mv)=\hat Y+\diag{\mbf{y}^\mrm{s},-\mathbf{y}^\mv}
\end{equation}
where $\mathbf{y}^\mrm{s}=\diag{y_1^\mrm{s},...,y_{n^\mrm{s}}^\mrm{s}}$, $\mathbf{y}^\mv=\diag{y_1^\mrm{v},...,y_{n^\mrm{l}}^\mrm{v}}$. 
The DCMG is thus equivalently re-partitioned into a feedback interconnection of $\tilde{G}^\mrm{s}_k(\nu;y^\ms_k)$ and $\tilde{Y}$ (Fig.~\ref{fg:DCMG in v domain}(c)). This systematically shifts the stabilization burden, compensating for CPL destabilization via the network and source-side indices.

To establish a decentralized condition, we derive an explicit uniform bound for all source devices.
Assuming the virtual admittance $y^\mv_k$ strictly satisfies $Y^\mrm{ll}\cos{\theta_0}-\mathbf{y}^\mrm{v}\succ0$, we define the Schur complement of the network condition as:
\begin{equation}\label{eq:Xi_matrix}
\Xi \triangleq Y^\mrm{ss}\cos{\theta_0} - Y^\mrm{sl}(Y^\mrm{ll}\cos{\theta_0}-\mathbf{y}^\mrm{v})^{-1}Y^\mrm{ls}\cos^2{\theta_0}
\end{equation}
\begin{theorem}\label{thm: uniform bound}
Suppose $Y^\mrm{ll}\cos{\theta_0}-\mathbf{y}^\mrm{v}\succ0$. The DCMG is $\mathcal{D}$-stable w.r.t. the region $\mathcal{D}(\theta_0, \omega_0, \sigma_0)$ with $\theta_0\in[0,\frac{\pi}{2}],\omega_0\ge0, \sigma_0 \le 0$ if:
\begin{enumerate}[label={\roman*)}]
\item\label{con: cor net} Network condition: $y_k^\ms\ge-\lambda_\mrm{min}(\Xi),\forall k\in\mathscr{V}^\mrm{s}$.
\item\label{con: cor device} Device condition: Each modified source $\tilde{G}^\mrm{s}_k(\nu;y_k^\ms)$ is a positive function, $\forall k\in\mathscr{V}^\mrm{s}$.
\end{enumerate}
\end{theorem}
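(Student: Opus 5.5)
The plan is to reduce the statement to Thm.~\ref{thm: D stable with positivity}, applied after the loop transformation of Fig.~\ref{fg:DCMG in v domain}. With every compensation angle set to $\phi_k=\theta_0$, the rotated network is $\hat Y=e^{-j\theta_0}Y$. Since the admittance matrix $Y$ is real symmetric, $\hat Y+\hat Y^\HT=2\cos\theta_0\,Y$.

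The transformation already established gives an equivalent negative-feedback interconnection between $\diag{\tilde G^\ms_k,\tilde G^\ml_k}$ and $\tilde Y=\hat Y+\diag{\mbf y^\ms,-\mbf y^\mv}$. The closed-loop characteristic equation is unchanged, because each local feedback is exactly cancelled by the corresponding feedforward. So it suffices to show three things: every modified device is a positive function, $\tilde Y+\tilde Y^\HT\succeq0$, and Thm.~\ref{thm: D stable with positivity} then yields $\calD$-stability through Prop.~\ref{prop:transformation}.

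For the devices, positivity of the modified sources is exactly condition~\ref{con: cor device}. Positivity of the modified CPLs $\tilde G^\ml_k(\nu)=[C^\ml_k(\nu-j\im{\nu_p})]^{-1}$ was verified in the text preceding the theorem: the pole is simple and on the imaginary axis, the residue $1/C^\ml_k$ is positive, and the Hermitian part vanishes on the axis.

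For the network, I compute
\begin{equation}\nonumber
\tfrac12(\tilde Y+\tilde Y^\HT)=\begin{bmatrix} Y^\mrm{ss}\cos\theta_0+\mbf y^\ms & Y^\mrm{sl}\cos\theta_0\\ Y^\mrm{ls}\cos\theta_0 & Y^\mrm{ll}\cos\theta_0-\mbf y^\mv\end{bmatrix}.
\end{equation}
This is real symmetric because $Y^\mrm{ls}=(Y^\mrm{sl})^\top$. By hypothesis the lower-right block is positive definite, so the Schur complement lemma says the whole matrix is positive semidefinite if and only if
\[
Y^\mrm{ss}\cos\theta_0+\mbf y^\ms-Y^\mrm{sl}(Y^\mrm{ll}\cos\theta_0-\mbf y^\mv)^{-1}Y^\mrm{ls}\cos^2\theta_0=\Xi+\mbf y^\ms\succeq0.
\]
By condition~\ref{con: cor net}, $\mbf y^\ms\succeq-\lambda_\mrm{min}(\Xi)I$. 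Then $\Xi+\mbf y^\ms\succeq\Xi-\lambda_\mrm{min}(\Xi)I\succeq0$, using Weyl's inequality or simply $x^\top\Xi x\ge\lambda_\mrm{min}(\Xi)\|x\|^2$. Hence $\tilde Y+\tilde Y^\HT\succeq0$, and Thm.~\ref{thm: D stable with positivity}, applied with $\tilde G,\tilde Y$ in place of $\hat G,\hat Y$, concludes the proof.

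The main obstacle is making the loop-transformation equivalence rigorous, in particular verifying that $\det(I+\hat G\hat Y)$ and $\det(I+\tilde G\tilde Y)$ have the same roots. I would write
\[
I+\hat G\hat Y=(I+\hat G\rho)\big(I+\tilde G(\hat Y-\rho)\big),\qquad \rho=\diag{-\mbf y^\ms,\mbf y^\mv},
\]
and then argue that the zeros of $\det(I+\hat G\rho)$ cancel against the poles of $\tilde G$. This cancellation is what the positivity of $\tilde G$ implicitly requires, namely condition (a) in Lem.~\ref{lem:positive function}. I would also track the sign convention carefully, since the problem uses $u=-Y y$ while the devices map $-\Delta i$ to $\Delta u$. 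Finally, I would note that the assumed range of $\theta_0$ gives $\cos\theta_0\ge0$, which keeps the construction consistent.
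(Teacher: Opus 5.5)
Your proposal is correct and follows essentially the same route as the paper: you reduce to Thm.~\ref{thm: D stable with positivity} for the loop-transformed pair $(\tilde G,\tilde Y)$, expand $\tilde Y+\tilde Y^{\HT}=2\big(\cos\theta_0\,Y+\diag{\mathbf{y}^{\mathrm{s}},-\mathbf{y}^{\mathrm{v}}}\big)$, and apply the Schur complement with the positive-definite load block to arrive at $\Xi+\mathbf{y}^{\mathrm{s}}\succeq 0$. Your final step, $\mathbf{y}^{\mathrm{s}}\succeq-\lambda_{\min}(\Xi)I$ for possibly heterogeneous indices, matches the theorem as stated more directly than the paper's specialization to uniform indices $y_k^{\mathrm{s}}=y^{\mathrm{s}}$.
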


\textit{Proof:} 
$\mathcal{D}$-stability holds if the modified system satisfies the conditions in Thm.~\ref{thm: D stable with positivity}. Since $Y$ and $\diag{\mbf{y}^\mrm{s},-\mathbf{y}^\mv}$ are real symmetric, the condition $\tilde{Y} + \tilde Y^\HT \succeq 0$ analytically expands to $2(\cos{\theta_0} Y+\diag{\mbf{y}^\mrm{s},-\mathbf{y}^\mv}) \succeq 0$. Given the assumption $Y^\mrm{ll}\cos{\theta_0}-\mathbf{y}^\mrm{v}\succ0$, the Schur complement lemma guarantees that $Y\cos{\theta_0}+\diag{\mbf{y}^\mrm{s},-\mathbf{y}^\mv}\succeq0$ is equivalent to $\mathbf{y}^\ms + \Xi \succeq 0$. Enforcing uniform positivity indices $y_k^\ms = y^\ms,\forall k\in\mathscr{V}^\mrm{s}$ simplifies the condition to $y^\ms I_{n^\ms} + \Xi \succeq 0$, which necessitates $y^\ms \ge -\lambda_\mrm{min}(\Xi)$. Satisfying this bound, alongside the positivity of modified devices, directly satisfies Thm.~\ref{thm: D stable with positivity}.\hfill$\square$

\begin{remark} \textit{(Physical Limit and Operational Intervention)}\label{remark:Feasibility of Admittance Shifting}
The assumption $Y^\mrm{ll}\cos{\theta_0}-\mathbf{y}^\mrm{v}\succ0$ represents the inherent damping capacity of the transmission network relative to the destabilizing effects of CPLs. If this assumption fails, $\calD$-stability cannot be guaranteed via Thm.~\ref{thm: uniform bound} purely by increasing source-side damping $y^\ms_k$. If Thm.~\ref{thm: uniform bound} remains violated despite decentralized synthesis, the operator must intervene at the system level by relaxing the performance region or shedding non-critical CPLs.
\end{remark}

\subsection{Decentralized Parameter Synthesis}\label{subsection: Decentralized Parameter Synthesis}
Thm.~\ref{thm: uniform bound} enables a decentralized, non-iterative parameter synthesis. The system operator calculates $y_k^\mv$ based on load predictions or measurements and then broadcasts $\lambda_{\mrm{min}}(\Xi)$ as the grid code. Devices then independently tune their parameters ($c_1, c_0, d_1, d_0$) to ensure $\tilde{G}_k^\ms(\nu; y^\ms_k)$ is a positive function while satisfying $y^\ms_k\ge-\lambda_\mrm{min}(\Xi)$. 
To facilitate explicit decentralized synthesis, we provide an algebraic test for generic second-order subsystems.

\begin{proposition}\label{prop:second-order coefficient condition}
A complex-coefficient second-order transfer function $h(\nu) = \frac{a_1 \nu + a_0}{\nu^2 + b_1 \nu + b_0}$, where $a_k = a_{kr} + j a_{ki}$ and $b_k = b_{kr} + j b_{ki}$ ($a_k, b_k \in \mathbb{C}$), is a positive function if:
\begin{enumerate}[label=\roman*)]
\item Strict stability: $b_{1r} > 0$ and $b_{1r}^2 b_{0r} + b_{1r}b_{1i}b_{0i} - b_{0i}^2 > 0$. 
\item Real-part non-negativity: $a_{1i} = 0$, $a_{1r}b_{1r} - a_{0r} \ge 0$, $a_{0r}b_{0r} + a_{0i}b_{0i} \ge 0$, and $(a_{0i}b_{1r} + a_{1r}b_{0i} - a_{0r}b_{1i})^2 \le 4(a_{1r}b_{1r} - a_{0r})(a_{0r}b_{0r} + a_{0i}b_{0i})$. 
\end{enumerate}
\end{proposition}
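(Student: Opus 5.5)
We verify the three conditions of Lem.~\ref{lem:positive function} for the scalar case $m=1$. Condition i) will give condition (a) in its strict form: both poles lie in $\re{\nu}<0$. Then there are no imaginary-axis poles, and condition (c) is vacuous. Condition ii) will give condition (b), i.e. $\re{h(j\omega)}\ge 0$ for all $\omega\in\mathbb{R}$.

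\emph{Stability.} For the complex quadratic $\nu^2+b_1\nu+b_0$, use the complex Routh--Hurwitz (Hermite--Biehler) test. One route is to form the real quartic $p(\nu)\bar p(\nu)$, where $\bar p(\nu)=\nu^2+\bar b_1\nu+\bar b_0$. Its roots are the roots of $p$ together with their conjugates, so $p$ is Hurwitz iff this real quartic is Hurwitz. The more direct route is the standard complex Routh array for degree two. Either way, the Hurwitz conditions should reduce to $b_{1r}>0$ and $b_{1r}^2b_{0r}+b_{1r}b_{1i}b_{0i}-b_{0i}^2>0$.

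As a sanity check, when $b_{1i}=b_{0i}=0$ these become $b_{1r}>0$ and $b_{0r}>0$. I would confirm the general form via the Hermite--Biehler argument. Set $\nu=j\omega$ and write
\[
p(j\omega)=(b_{0r}-\omega^2-b_{1i}\omega)+j(b_{1r}\omega+b_{0i}).
\]
The real part and the imaginary part (divided by $j$) must have real, interlacing roots with the correct leading signs. The imaginary part vanishes at $\omega^*=-b_{0i}/b_{1r}$. Interlacing then requires the real part to be positive there:
\[
b_{0r}-\frac{b_{0i}^2}{b_{1r}^2}+\frac{b_{1i}b_{0i}}{b_{1r}}>0 .
\]
Multiplying by $b_{1r}^2$ gives exactly the stated inequality.

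\emph{Real part on the imaginary axis.} Compute
\[
\re{h(j\omega)}=\frac{\re{(a_0+ja_1\omega)\overline{(b_0-\omega^2+jb_1\omega)}}}{|p(j\omega)|^2}.
\]
The denominator is positive, since by step one there are no poles on the axis. Expand the numerator $N(\omega)$ as a polynomial in $\omega$. The $\omega^3$ term has coefficient $a_{1i}$, which is why condition ii) requires $a_{1i}=0$: otherwise $N$ is an odd-degree polynomial and changes sign. With $a_{1i}=0$, $N$ becomes the quadratic
\[
N(\omega)=(a_{1r}b_{1r}-a_{0r})\omega^2+(a_{0i}b_{1r}+a_{1r}b_{0i}-a_{0r}b_{1i})\,\omega+(a_{0r}b_{0r}+a_{0i}b_{0i}),
\]
up to the sign convention on the linear term, which I would double-check in the expansion. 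A real quadratic $\alpha\omega^2+\beta\omega+\gamma$ is nonnegative on $\mathbb{R}$ if $\alpha\ge0$, $\gamma\ge0$ and $\beta^2\le4\alpha\gamma$. The degenerate case $\alpha=0$ forces $\beta=0$, so $N\equiv\gamma\ge0$. These are precisely the remaining inequalities in ii).

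The main obstacle is the bookkeeping: matching the expanded coefficients of $N(\omega)$, including signs, to the stated expressions, and carrying out the complex Hurwitz reduction cleanly. The logical structure itself is immediate from Lem.~\ref{lem:positive function}.
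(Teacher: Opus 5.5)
Your proposal is correct and follows the same route as the paper. Strict stability makes the residue condition vacuous, your expansion of the numerator of $\mathrm{Re}\,h(j\omega)$ coincides with the paper's $N(\omega)$ (the sign of the linear coefficient is immaterial since only its square enters), and then $a_{1i}=0$ plus the quadratic nonnegativity test finish the argument. The only difference is the stability step: you verify it by a self-contained Hermite--Biehler/argument-principle check, whereas the paper cites Marden's complex Hurwitz determinants. In your check, the unique zero $\omega^*=-b_{0i}/b_{1r}$ of the imaginary part must lie strictly between the two real zeros of the downward-opening real part, i.e.\ the real part is positive at $\omega^*$, with $b_{1r}>0$ fixing the orientation. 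Both routes yield exactly $b_{1r}^2b_{0r}+b_{1r}b_{1i}b_{0i}-b_{0i}^2>0$.
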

\textit{Proof:} See Appendix~\ref{ap:coefficient condition}. \hfill$\square$

By Prop.~\ref{prop:second-order coefficient condition}, we derive explicit local constraints for the three predefined regions:
\begin{itemize}[leftmargin=*]
\item For Shifted LHP $\calD_\mrm{LHP}(\alpha)$: Setting $s(\nu)=\nu+\alpha$ yields: 
\begin{equation}\nonumber
\begin{aligned}
&\tilde{G}^\mrm{s}_k(\nu;y^\ms_k)=
\\ &\frac{c_1 \nu + c_1\alpha + c_0}{\nu^2+(d_1+2\alpha-c_1y^\ms_k)\nu + \alpha^2+ d_1\alpha+d_0-(c_1\alpha+ c_0)y^\ms_k}
\end{aligned}
\end{equation}
The device guarantees positivity if the following constraints are met, which also establish upper bounds for $y_k^\ms$: $\alpha \ge -c_0/c_1$, $y_k^\ms \le \frac{d_1+\alpha- c_0/c_1}{c_1}$, and $y_k^\ms < \frac{\alpha^2+d_1\alpha+d_0}{c_1\alpha+c_0}$.
\item For Sector $\calD_\mrm{SEC}(\beta)$: Setting $s(\nu) = e^{j(\pi/2 - \beta)}\nu$ yields the constraints: $y_k^\ms < \frac{d_0 \sin\beta}{c_0}$ and $y_k^\ms < \frac{c_1 d_1 - c_0}{c_1^2 \sin\beta} - \frac{d_0 \cos^2\beta}{c_0 \sin\beta}$.
\item For Horizontal Strip $\calD_\mrm{HS}(\gamma)$: Setting $s(\nu) = j\nu+j\gamma$. A critical observation for this region is that $\cos{\theta_0}=\cos{\frac{\pi}{2}}=0$, so the network condition in Thm.~\ref{thm: uniform bound} trivially holds for any $y^\ms_k\ge0$. Thus, locally setting $y^\ms_k=0$ yields the constraints: $\gamma > \bar\gamma \triangleq \max\{ \sqrt{ (c_0/c_1)^2 - (c_0/c_1)d_1 + d_0 }, 0 \}$.
\end{itemize}

\begin{remark}
The positivity property is \textit{monotonic} with respect to $y^\ms_k$. If $\tilde{G}^\mrm{s}_k(\nu;y^\ms_{k,1})$ is a positive function, then $\tilde{G}^\mrm{s}_k(\nu;y^\ms_{k,2})$ remains a positive function for any $y^\ms_{k,2} < y^\ms_{k,1}$. 
\end{remark}


\section{Case Study}
The proposed method is validated on a DCMG adopting the IEEE 39-node topology \cite{canizares2015benchmark}, chosen for its well-understood network structure in the absence of standardized DCMG benchmarks. Each line resistance is $0.1~\Omega$. The DCMG comprises boost-type ESSs (Nodes $1, 2, 5, 6, 9, 10, 11$), buck-type ESSs (Nodes $13, 14, 16, 17, 19, 22, 28$), PVs (Nodes $30\dots39$), and CPLs (Nodes $3, 4, 7, 8, 12, 15, 18, 20, 21, 23, 24, 25, 26, 27, 29$). PVs adopts Anhui Rinengzhongtian QJM200-72 module (configured in 7 parallel strings) from MATLAB/Simulink. The detailed physical and default control parameters for each unit type are summarized in Table~\ref{tab:system_parameters}. The target pole region is set as $\mathbf{D}= \mathcal{D}_\mrm{LHP}(-8) \cap \mathcal{D}_\mrm{SEC}(\frac{5\pi}{12}) \cap \mathcal{D}_\mrm{HS}(24\pi)$.

\begin{table}[t]
\centering
\caption{Physical and Default Control Parameters of the DCMG}
\label{tab:system_parameters}
\setlength{\tabcolsep}{3pt} 
\begin{tabular}{@{}lcccc@{}}
\toprule
Parameter & Boost ESS & Buck ESS & PV & CPL \\
\midrule
Filter inductor (mH) & 0.5 & 0.5 & 0.5 & 0.5 \\
$C$ or $C^\ml$ (mF) & 2 & 3 & 2 & 2 \\
Source Volt. $E$  (V) & 50 & 200 & - & - \\
Volt. ref. $U^\mr$/$U^\mr_\mpv$/$U^\mr_\mrm{o}$ (V) & 105 & 105 & 36.12 & 50 \\
Droop coef. $R^\md$ ($\Omega$) & 0.6 & 0.7 & - & - \\
Current PI & 0.02, 40 & 0.02, 40 & - & 0.01, 10 \\
Voltage PI ($k_\mrm{P}^u, k_\mrm{I}^u$) & 0.01, 60 & 0.01, 50 & 0.1, 0.5 & 0.5, 50 \\
\bottomrule
\multicolumn{5}{l}{\footnotesize *Note: For PV, $C_\mpv=4~\mrm{mF}$. For CPL, $C_\mrm{o}=4~\mrm{mF}$, $P=1500~\mrm{W}$.}
\end{tabular}
\end{table}
\begin{figure}[t]
    \centering
\includegraphics[width=0.95\linewidth]{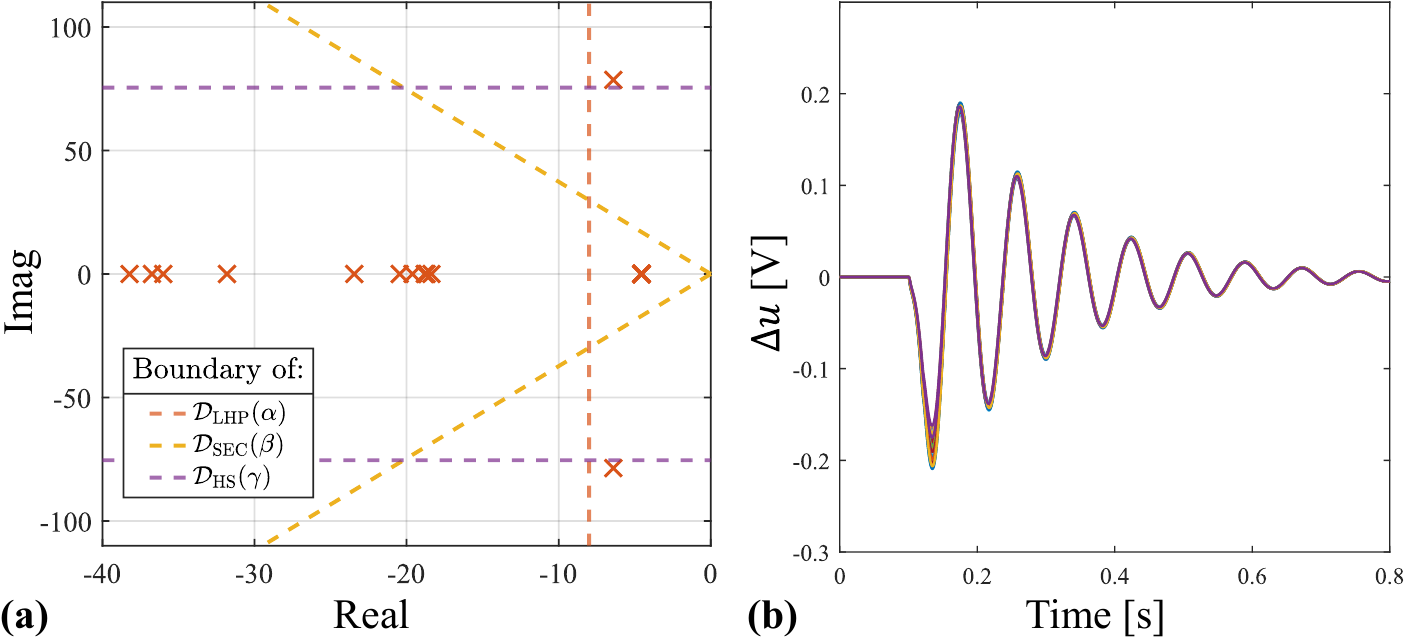}
    \caption{Under default parameters: (a) Distribution of the dominant closed-loop poles. (b) Voltage trajectories following a load disturbance.}
    \label{fg:ROeig_traj_default}
\end{figure}
\begin{figure}[t]
    \centering
\includegraphics[width=0.95\linewidth]{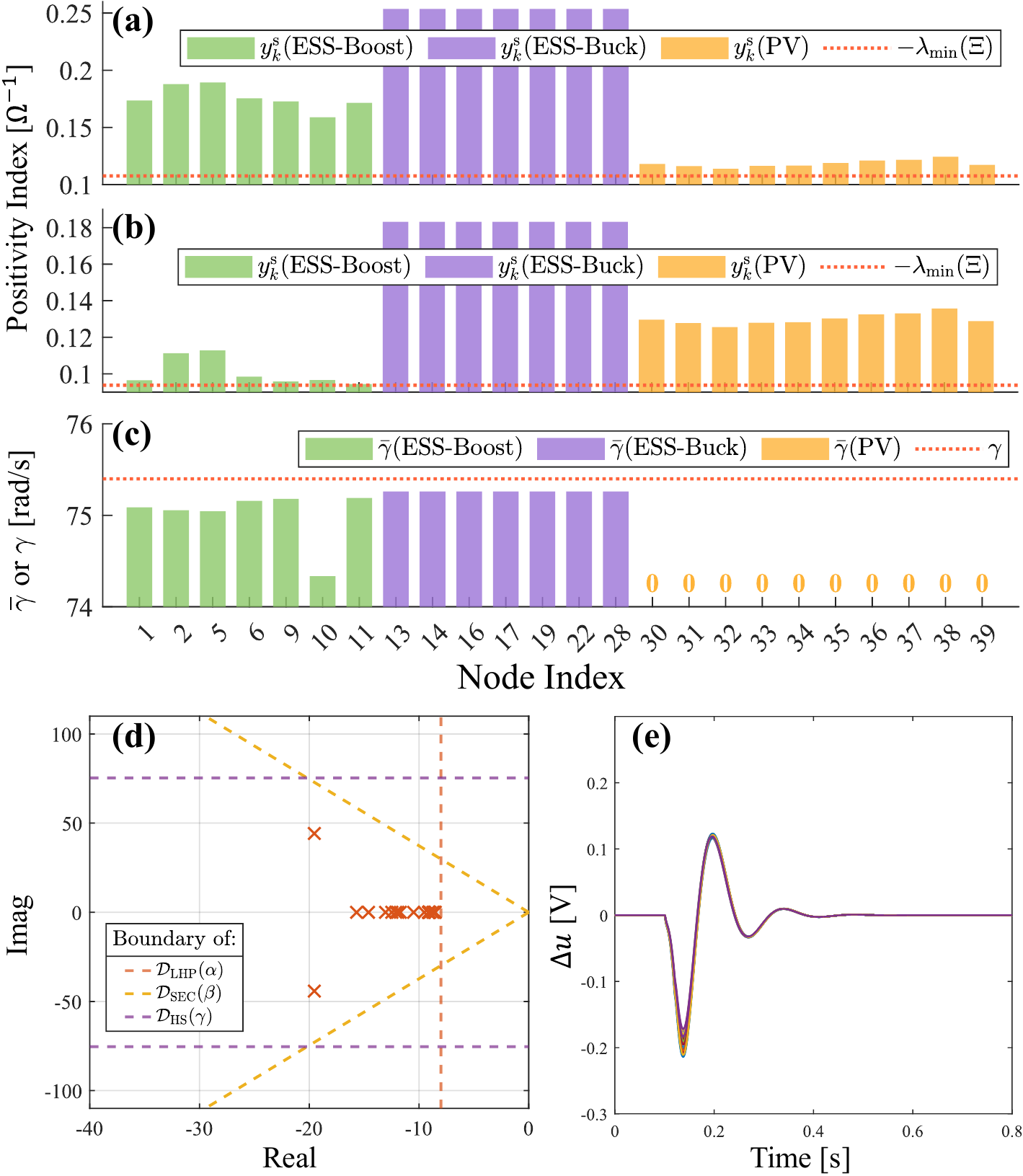}
    \caption{Verification of $\calD$-stability conditions after parameter synthesis. Source positivity indices for (a) $\mathcal{D}_\mrm{LHP}(-8)$ and (b) $\mathcal{D}_\mrm{SEC}(5\pi/12)$. (c) Source natural frequency bounds for $\mathcal{D}_\mrm{HS}(24\pi)$. (d) Resulting pole distribution within $\mathbf{D}$. (e) Voltage trajectories under the identical disturbance.}
    \label{fg:ROeig_traj_LHPSECHS}
\end{figure}

Under the default parameters, the dominant poles are poorly damped, as shown in Fig.~\ref{fg:ROeig_traj_default}(a). A pulse load disturbance is applied at $t=0.1~\mrm{s}$ (1\% load step increase lasting for $0.02~\mrm{s}$). The voltage trajectories in Fig.~\ref{fg:ROeig_traj_default}(b) exhibit severe fluctuations and a sluggish recovery.

To ensure $\calD$-stability, device-level parameters are synthesized by verifying the positivity of the modified devices $\tilde{G}_k(\nu;y_k^\ms)$ and the network $\tilde{Y}$. To maintain the intended power flow of the DCMG and avoid unintended power flow redistribution, the equilibrium $(u_k^*, i_k^*)$ should remain invariant during the local synthesis. For an ESS, this requires the adjusted droop coefficient $R^\md_k$ and voltage reference $U^\mr_k$ to satisfy the constraint $U^\mr_k = u_k^* + R^\md_k i_k^*$. Similarly, PV references $U^\mr_\mpv$ should be maintained to preserve output power. Consequently, the local control parameters are adjusted as follows:
\begin{itemize}[leftmargin=*]
\item ESS-boost: $k_\mrm{I}^u=26.5$ for all nodes. $k_\mrm{P}^u=0.36$ for nodes 1 and 10, and $0.35$ for the rest. $R^\md$ is scaled by a factor of 1.19 for node 10 and 1.18 for the others, with the voltage reference updated to $U^\mr=R^\md i^*+u^*$.
\item ESS-buck: Adjusted to $k_\mrm{P}^u=0.38$ and $k_\mrm{I}^u=21$.
\item PVs: Adjusted to $k_\mrm{I}^u=1$.
\end{itemize}
The parameter synthesis guarantees $\mbf{D}$-stability by satisfying the three regional constraints:
i) For $\calD_\mrm{LHP}(-8)$: Under the mapping $s_1(\nu) = \nu -8$, 
the positivity index $y_k^\ms$ of each source is tuned to its upper limit in Section \ref{subsection: Decentralized Parameter Synthesis} and satisfies the network condition $y_k^\ms > -\lambda_\mrm{min}(\Xi_1)$ in Thm.~\ref{thm: uniform bound}, as verified in Fig.~\ref{fg:ROeig_traj_LHPSECHS}(a). 
ii) For $\mathcal{D}_\mrm{SEC}(\frac{5\pi}{12})$: Under the mapping $s_2(\nu) = e^{j(\frac{\pi}{2} - \frac{5\pi}{12})}\nu$, the indices $y_k^\ms$ derived similarly exceed the sector-specific network bound $-\lambda_\mrm{min}(\Xi_2)$, as plotted in Fig.~\ref{fg:ROeig_traj_LHPSECHS}(b). 
iii) For $\mathcal{D}_\mrm{HS}(24\pi)$: Under the mapping $s_3(\nu) = j\nu + j24\pi$, the local natural frequency bound $\bar\gamma_k < \gamma$ is enforced for all sources, satisfying the condition in Section \ref{subsection: Decentralized Parameter Synthesis}, as shown in Fig.~\ref{fg:ROeig_traj_LHPSECHS}(c).

Following this device-level synthesis, all closed-loop poles are successfully confined within the region $\mathbf{D}$ (see Fig.~\ref{fg:ROeig_traj_LHPSECHS}(d)). Consequently, under the identical load disturbance, the node voltages exhibit superior dynamic performance, rapidly returning to the steady state with suppressed oscillation, as demonstrated in Fig.~\ref{fg:ROeig_traj_LHPSECHS}(e).

\section{Conclusion}
This paper proposes a decentralized method for $\mathcal{D}$-stability in networked systems by generalizing positive realness to the broader concept of positive transfer functions. We prove that regional pole placement can be guaranteed via local frequency-domain criteria, bypassing the confidentiality and communication barriers in existing LMI-based techniques. 
The application to DC microgrids, aided by loop transformations, yields a broadcastable grid code. This enables a ``plug-and-play" operational paradigm where independent subsystems ensure $\calD$-stability through decentralized compliance.

Future research will investigate the robustness of the proposed method to network uncertainties, such as uncertain line parameters and varying topologies. Furthermore, applying this positive-function method to analyze the heterogeneous oscillator synchronization in AC power systems is a highly promising direction. 
Since the proposed method accommodates complex-coefficient transfer functions, it provides a scalable tool for capturing the complex-valued phase dynamics, $dq$-frame cross-couplings, and synchronization requirements inherent in modern power-electronics-interfaced AC grids.

\appendix
\section{Appendix}

\subsection{Proof of Proposition~\ref{prop:second-order coefficient condition}}\label{ap:coefficient condition}
To establish $h(\nu)$ as a positive function, it suffices to prove that all poles are in $\re{\nu} < 0$ and that $\re{h(j\omega)} \ge 0,\forall \omega \in \mathbb{R}$. The condition regarding residues at pure imaginary poles is trivially satisfied since strict stability precludes their existence.

\textit{Strict stability} ($\re{\nu} < 0$): Let the two roots of the denominator polynomial $D(\nu) = \nu^2 + b_1 \nu + b_0 = 0$ be $\nu_1$ and $\nu_2$. By the generalized Routh-Hurwitz criterion for complex polynomials \cite[Thm.~(40,1)]{marden1949geometry}, all roots lie in $\re{\nu} < 0$ if and only if the Hurwitz determinants are positive. The first determinant is $\Delta_1 = b_{1r} > 0$, and the second is:
\begin{equation}\nonumber
\Delta_2 = \begin{vmatrix} 
b_{1r} & 0 & -b_{0i} 
\\ 1 & b_{0r} & -b_{1i}
\\ 0 & b_{0i} & b_{1r} 
\end{vmatrix} = b_{1r}^2 b_{0r} + b_{1r}b_{1i}b_{0i} - b_{0i}^2 > 0
\end{equation}
Thus, the first set of conditions guarantees strict stability.

\textit{Real-part non-negativity} ($\re{h(j\omega)} \ge 0$): The real part of the frequency response evaluated on the imaginary axis is:
\begin{equation}\nonumber
\re{h(j\omega)} = \frac{N(\omega)}{|(j\omega)^2 + b_1(j\omega) + b_0|^2}
\end{equation}
where the numerator $N(\omega)$ is a real polynomial:
\begin{equation}\nonumber
\begin{aligned}
N(\omega) = ~&a_{1i}\omega^3 + (a_{1r}b_{1r} + a_{1i}b_{1i} - a_{0r})\omega^2 
\\&+ (a_{0i}b_{1r} + a_{1r}b_{0i} - a_{0r}b_{1i} - a_{1i}b_{0r})\omega 
\\&+ (a_{0r}b_{0r} + a_{0i}b_{0i})    
\end{aligned}
\end{equation}
For $N(\omega) \ge 0$ to hold across all $\omega \in \mathbb{R}$, the coefficient of the highest odd-degree term must vanish, necessitating $a_{1i} = 0$. Consequently, $N(\omega)$ reduces to a quadratic form $A\omega^2 + B\omega + C$, where $A = a_{1r}b_{1r} - a_{0r}$, $B = a_{0i}b_{1r} + a_{1r}b_{0i} - a_{0r}b_{1i}$, and $C = a_{0r}b_{0r} + a_{0i}b_{0i}$. This quadratic function is globally non-negative if and only if $A \ge 0$, $C \ge 0$, and its discriminant $B^2 - 4AC \le 0$. These requirements exactly match the second set of conditions, completing the proof.
\hfill$\square$

\ifCLASSOPTIONcaptionsoff
  \newpage
\fi





\bibliographystyle{IEEEtran}
\bibliography{IEEEabrv,Bibliography}
%





\vfill


\end{document}